\documentclass[12pt,english]{article}
\usepackage[T1]{fontenc}
\usepackage[latin9]{inputenc}
\usepackage{xcolor}
\usepackage{babel}
\usepackage{float}
\usepackage{enumitem}
\usepackage{amsmath}
\usepackage{amsthm}
\usepackage{amssymb}
\usepackage{graphicx}
\usepackage[a4paper]{geometry}
\usepackage{setspace}
\usepackage[authoryear]{natbib}
\usepackage{microtype}
\usepackage[pdfusetitle,
 bookmarks=true,bookmarksnumbered=false,bookmarksopen=false,
 breaklinks=false,pdfborder={0 0 0},pdfborderstyle={},backref=false,colorlinks=false]
 {hyperref}

\makeatletter

\providecommand{\tabularnewline}{\\}
\floatstyle{ruled}
\newfloat{algorithm}{tbp}{loa}
\providecommand{\algorithmname}{Algorithm}
\floatname{algorithm}{\protect\algorithmname}

\theoremstyle{plain}
\newtheorem{lem}{\protect\lemmaname}
\theoremstyle{plain}
\newtheorem{assumption}{\protect\assumptionname}
\theoremstyle{plain}
\newtheorem{thm}{\protect\theoremname}

\usepackage{lmodern}
\usepackage{algorithm}
\usepackage{algpseudocode}
\usepackage{mathtools}

\ifdefined\showcaptionsetup
 \PassOptionsToPackage{caption=false}{subfig}
\fi
\usepackage{subfig}
\makeatother

\providecommand{\assumptionname}{Assumption}
\providecommand{\lemmaname}{Lemma}
\providecommand{\theoremname}{Theorem}

\begin{document}
\title{Generative Predictive Distributions for Time Series\thanks{The authors thank Ministerio de Ciencia, Innovación y Universidades
through grant CEX2021-001181-M (Llorens-Terrazas) and the Research
Council of Finland (Meitz) for financial support, and CSC -- IT Center
for Science, Finland, for computational resources during early stages
of this research. We would also like to thank Caio Almeida, Dante
Amengual, Giovanni Ballarin, Christian Brownlees, Valentina Corradi,
Juan Carlos Escanciano, Gustavo Freire, Lyudmila Grigoryeva, Stefán
Guðmundsson, Daniel Gutknecht, Adam Lee, Alessandra Luati, Eduardo
Fonseca Mendes, Rasmus Søndergaard Pedersen, Anders Rahbek, Alessio
Sancetta, André B.M. Souza, Jesper Riis-Vestergaard Sørensen, Emilio
Zanetti Chini, and Stefan Voigt for useful discussions and for providing
numerous helpful comments, and Christian Dahl and Emil Sørensen for
sharing their code with us. Contact addresses: Jordi Llorens-Terrazas,
Department of Economics, Universidad Carlos III de Madrid, 28903 Getafe
(Madrid); e-mail: jordi.llorens@uc3m.es. Mika Meitz, Discipline of
Economics, University of Helsinki, P. O. Box 17, FI--00014 University
of Helsinki, Finland; e-mail: mika.meitz@helsinki.fi.}}
\author{Jordi Llorens-Terrazas\\
Universidad Carlos III de Madrid \and Mika Meitz\\
University of Helsinki}
\date{\vspace{10bp}
June 15, 2026}
\maketitle
\begin{abstract}
\begin{singlespace}
\noindent We propose a flexible framework for modeling the predictive
distributions of nonlinear, possibly multivariate time series. Our
approach expresses a general predictive distribution in an appropriate
generative representation that is based on a folklore result from
measure theoretic probability. This representation provides a direct
simulation-based approximation to the predictive distribution, enabling
straightforward computation of forecasts for the conditional mean
and variance, fan charts, value at risk, expected shortfall, joint
tail risks, and other quantities of interest. We estimate this generative
representation using a version of conditional generative adversarial
networks and provide a formal statistical analysis of estimation under
weak temporal dependence. Specifically, estimation is expressed as
a particular minimax problem and we establish consistency of its approximate
solutions in Hausdorff distance. The empirical relevance of the approach
is illustrated using applications to equity returns, realized variance,
and realized covariances. The proposed method is also computationally
manageable, with estimation in our applications taking approximately
one minute on a standard laptop.
\end{singlespace}
\end{abstract}
\bigskip{}

\begin{singlespace}
\noindent\textbf{Keywords}: Predictive distribution, nonlinear time
series, generative representation, generative adversarial networks,
minimax estimation, Hausdorff consistency, machine learning, simulation,
computational methods.
\end{singlespace}

\bigskip{}

\begin{singlespace}
\noindent\textbf{JEL classification}: C22, C32, C45, C53, C58. \vfill{}

\end{singlespace}

\pagebreak{}

\section{Introduction}

The predictive distribution of a (possibly multivariate) time series
provides a complete description of the uncertainty surrounding its
future value(s), and is a key input for financial institutions, businesses,
and governments \citep{Corradi:Swanson:2006}. Classical applications
include fan charts, Value-at-Risk analysis \citep{Duffie:Pan:1997},
portfolio management \citep{Guidolin:Timmermann:2008}, and systemic
risk measurement \citep{Brownlees:Engle:2017}. In many settings,
the end goal is not merely to produce forecasts but to make decisions
that involve future values of a time series, which are unknown to
the practitioner at the time decisions are made. A case in point is
a risk-averse investor who must decide what proportion of wealth to
allocate to a risky asset, see e.g. \citet{Kandel:Stambaugh:1996}.
In such cases, the researcher needs a method for sampling from the
predictive distribution. The typical solution is to postulate a model
describing the evolution of the time series and combining this with
a convenient distributional assumption, such as Gaussianity. Common
approaches used for specifying the predictive distribution include
observation-driven models (such as autoregressive moving average (ARMA)
or generalized autoregressive conditionally heteroskedastic (GARCH)
models), parameter-driven models (such as state-space or stochastic
volatility models), semi- or non-parametric approaches, and various
nonlinear models (such as mixture or Markov switching models). These
approaches will be briefly reviewed in Section \ref{sec:literature}. 

In this paper, we consider a different approach to estimating the
predictive distribution of a time series. This approach is inspired
by a recent, novel method for sampling from a general conditional
distribution that was recently proposed by \citet{Zhou:etal:2023}
and further discussed by \citet{song:etal:202X}. In a nutshell, these
authors introduced a method for sampling from a conditional distribution
by first expressing it in a certain generative representation, and
then estimating this representation using a suitable formulation of
so-called generative adversarial networks (GANs). GAN is a machine
learning method introduced by \citet{Goodfellow:etal:2014}; we will
review it in Section \ref{sec:literature}. This generative representation
of the predictive distribution is an explicit function or formula
that takes two main arguments: (i) a vector of conditioning variables,
typically including lags of the time series, and (ii) an innovation
drawn from a convenient distribution for sampling, such as a standard
Gaussian. With such a representation in hand, sampling from the predictive
distribution of any nonlinear transformation of the time series is
straightforward. In other words, given conditioning variables, this
representation specifies in closed form a map (or transport) from
a Gaussian distribution to the predictive distribution of interest.
Once estimated, the generative representation provides a direct simulation-based
approximation to the predictive distribution, enabling straightforward
computation of nonlinear transformations, risk measures, and/or optimal
decision rules.

The aims and contributions of this paper are fourfold. First, we extend
the previous work of \citet{Zhou:etal:2023} and \citet{song:etal:202X}
to the time series setting: while their work concerns conditional
distributions of a variable $Y_{t}$ given another variable $X_{t}$
where the pairs $(Y_{t},X_{t})$ are independent and identically distributed
(IID) over time, our focus is on economic and financial time series
data that are not IID. We express the predictive distribution of a
time series in a generative representation and in doing so develop
a general framework for predictive distribution modeling in time series,
accommodating both uni- and multivariate outcomes, either one-step
or multiple-steps ahead predictive distributions, nonlinear dynamics,
non-Gaussian features, and potential additional covariates. Linear
Gaussian autoregressive models arise as special cases, while richer
specifications are obtained by approximating the generative representation
via a complicated nonlinear function such as a neural network. For
ease of discussion, we call the proposed approach the Generative Predictive
Distribution (GPD) method.

Second, we provide a formal statistical analysis of estimation for
the GPD methodology. The generative representation of the predictive
distribution is estimated using GANs (details will be given in Section
\ref{sec:gpd}). As in the GAN literature, we will call the generative
representation simply the generator for short. This generator is estimated
using an adversarial criterion that compares observed data to simulated
data via a flexible discriminator. This estimation problem can be
expressed as a particular minimax problem, with such problems typically
admitting a multitude of solutions and being solved only approximately.
In practice, both the generator and the discriminator are complicated
neural networks that are inherently non-identifiable, leading to a
large number of solutions. In our theoretical results, we treat both
population and sample solutions as set-valued objects and establish
Hausdorff consistency of approximate estimators under weak temporal
dependence. These developments build on previous results of \citet{Meitz:2024}
who considered related questions in minimax estimation problems with
IID data. Although statistical properties of estimators such as consistency
are questions that are typically not of interest in the machine learning
literature, for econometricians and statisticians our consistency
result is reassuring in that estimation will be consistent even under
the empirically relevant case of multiple solutions. In particular,
our theoretical framework covers the setups considered in our empirical
applications. Our proof relies on empirical process methods developed
by \citet{Arcones:Yu:1994}, which we verify under more primitive
conditions and can simultaneously accommodate temporal dependence,
unbounded data, and non-differentiability of the class of functions.
The latter is important so as to allow for activation functions that
are not everywhere differentiable, such as rectified linear units
(ReLU), that are commonly employed in the machine learning literature.

Third, we aim to strike the right balance between making our methodology
accessible to readers new to machine learning, computational ease
of the proposed methods, and empirical relevance. For clarity and
transparency of presentation, we focus on the commonly used multilayer
perceptron (MLP) networks with ReLU activation \citep{Lecun:Bengio:Hinton:2015}
for the generator and discriminator to fix ideas, though our results
apply to a wider generality of models and are not restricted to neural
networks. We find that even if care is needed in specifying the networks,
there is also no need to go to great lengths in order to obtain reasonable
practical results. Following our GPD recipe, estimation in our applications
can be carried out on a standard laptop in approximately one minute,
which stands in stark contrast with the computation times in the majority
of existing applications based on GANs.

Fourth, we demonstrate the empirical relevance and performance of
our GPD methodology in three different substantive applications in
financial econometrics: (i) In an application to portfolio allocation
based on asset returns, the GPD-based predictive distribution captures
state-dependent asymmetries and tail behavior in S\&P 500 monthly
returns that lead to nontrivial variation in optimal allocations for
an investor with constant relative risk aversion utility; (ii) in
an application to out-of-sample S\&P 500 realized variance forecasting,
GPD performs competitively relative to standard benchmarks, with particularly
strong results when forecasting variance in levels, where departures
from Gaussianity are most pronounced; (iii) in an application to modeling
the realized covariance between JP Morgan and Bank of America, GPD
produces predictive distributions that reproduce both marginal features
and dependence patterns which are consistent with a non-Gaussian distribution.
We emphasize that across these three different applications, we employ
the same algorithm, criterion function, and neural network architectures.
This illustrates the ability of our approach to handle a wide variety
of tasks.

The rest of the paper is organized as follows. In Section \ref{sec:literature},
we discuss the relation of our paper to existing literature. Section
\ref{sec:gpd} presents the proposed GPD methodology in detail. Section
\ref{sec:Estimation} considers practical estimation as well as consistency
in the Hausdorff sense. Section \ref{sec:empirical} develops three
empirical applications: dynamic asset allocation based on stock returns,
realized variance forecasting, and realized covariance modeling. Section
\ref{sec:conclusion} concludes, while the Appendix contains all proofs
and some technical details.

\section{Relation to existing literature\label{sec:literature}}

To place our paper in context with existing literature we next discuss
its relation to a few different strands of literature. Readers primarily
interested in the contributions of the present paper can proceed directly
to Section 3. 

\subsection{The \citet{Zhou:etal:2023} and \citet{song:etal:202X} papers}

As mentioned in the Introduction, our paper builds on the previous
works of \citet{Zhou:etal:2023} and \citet{song:etal:202X}. Compared
to these works, our paper has a number of differences and additional
contributions. First, their work is in an IID setting and concerns
the conditional distributions of $Y_{t}$ given $X_{t}$ with the
pairs $(Y_{t},X_{t})$ being IID over time. In our Section \ref{subsec:generator}
we show how their idea of a generative representation of a conditional
distribution can be extended to a time series setting (see our Lemma
\ref{lem:representation}). Second, while our formulations of the
generator and the discriminator are quite similar to these two earlier
papers, we consider a different minimax-type criterion function that
estimation is based on. The technical details will be given in Section
\ref{subsec:criterion}, but we already note that our formulation
relies on a more recent relativistic formulation of GANs that has
lately been promoted as more stable and well-behaving from an optimization
point of view. This alternative formulation proves useful also in
our theoretical derivations. As a third additional contribution compared
to these earlier works, we study the statistical properties of estimators
in our GPD framework and establish Hausdorff consistency. Fourth,
while the applications considered by \citet{Zhou:etal:2023} and \citet{song:etal:202X}
are more suited for a machine learning audience, we consider three
different substantive applications that are of direct interest to
econometricians.

\subsection{Generative adversarial networks (GANs)}

As will be discussed in Section \ref{sec:gpd}, the formulation and
estimation of our GPD method is based on a version of GANs. To support
intuition and facilitate later discussions, we first informally describe
what GANs are. A GAN is a machine learning method introduced in \citet{Goodfellow:etal:2014}
and its basic purpose is to learn how to generate synthetic data based
on a training set of real-world examples (of, say, images). The original
GAN problem can be expressed as the minimax problem 
\[
\adjustlimits\inf_{\gamma\in\Gamma}\sup_{\delta\in\Delta}f_{\text{Goodfellow}}(\gamma,\delta)\quad\text{with}\quad f_{\text{Goodfellow}}(\gamma,\delta)=\mathbb{E}[\ln(\text{Dis}(Y,\delta))]+\mathbb{E}[\ln(1-\text{Dis}(\text{Gen}(Z,\gamma),\delta))]
\]
(this exact formulation will not be used in our paper but presenting
it may aid the readers' intuition). Here the random vector $Y$ represents
underlying real data whose distribution remains unknown to us. The
generator $\text{Gen}(\cdot,\gamma)$ is typically a complicated neural
network that depends on parameters $\gamma\in\Gamma$ that are tuned
with the aim that the distribution of the output synthetic data $\text{Gen}(Z,\gamma)$
closely mimics that of the real data $Y$ (with $Z$ denoting a random
vector of input noise variables). To assess the quality of the synthetic
data produced, a discriminating mechanism $\text{Dis}(\cdot,\delta)$
(a complicated neural network with parameters $\delta\in\Delta$ to
be estimated) indicates how likely it is that an input, whether original
data $Y$ or a replica $\text{Gen}(Z,\gamma)$, is real data. The
formulation we use in this paper is similar to this original formulation
but with several important differences to be discussed later in Section
\ref{sec:gpd}. 

A central conceptual distinction in the GAN literature is whether
the generative process is unconditional or conditional. In our approach,
the focus is on generating the conditional predictive distribution
of future observables given past information. Suppose our observed
data consists of a response variable $Y$ and covariates $X$. An
unconditional GAN aims to learn the joint distribution $P_{Y,X}$.
In this setting, the covariates are not fixed; rather, the generator
takes only the noise vector $Z$ as input and samples the full data
pair, namely $(Y_{\text{fake}},X_{\text{fake}})=\text{Gen}(Z,\gamma)$.
In contrast, a conditional GAN \citep{Mirza:Osindero:2014} targets
the conditional distribution $P_{Y\mid X}$. The covariates $X$ are
treated as given information and are fed into both networks. The generator
function becomes $\text{Gen}(Z,X,\gamma)$, producing only a synthetic
response $Y_{\text{fake}}$ conditioned on the observed $X$. Similarly,
the discriminator evaluates the pair via $\text{Dis}(Y,X,\delta)$,
judging not just whether $Y$ looks realistic in isolation, but whether
$Y$ is a plausible realization given the specific covariates $X$.
This conditional version of GANs is the one used in this paper.

We are not the first to apply GANs in time series nor in economics,
nor to study their statistical properties. In the economics literature,
\citet{Kaji:Manresa:Pouliot:2023}\textcolor{red}{{} }illustrated how
the GAN principle serves as a general framework for estimation and
inference for structural models. Other clever use cases in economics
and finance include design of empirically relevant Monte Carlo simulations
\citep{Athey:Imbens:Metzger:Munro:2024}, time series bootstrapping
\citep{Dahl:Sorensen:2022}, and asset pricing \citep{Chen:Pelger:Zhu:2024}.
Statistical aspects of GANs have been studied for instance in \citet{Biau:etal:2020},
\citet{Meitz:2024}, and \citet{puchkin2024rates}. Various theoretical
aspects of GANs have been studied in the machine learning literature;
these papers are too numerous to survey here but the three papers
just cited contain many useful references. 

When transitioning from IID data to time series, the GAN framework
requires careful adaptation. It is reasonable to expect unconditional
GAN algorithms to still learn the unconditional distribution of a
stationary time series, but substantial architectural modifications
are needed if the goal is to generate synthetic series that preserve
also the dependence structure of the true data generating process.
Prominent examples of such time-series specific algorithms include
TimeGAN \citep{Yoon:Jarrett:VanderSchaar:2019} and QuantGAN \citep{Wiese:etal:2020}.
In contrast to these works, our approach is more closely aligned with
\citet{Zhou:etal:2023} and \citet{song:etal:202X}. Because our method
focuses on sampling directly from the predictive distribution ---
effectively acting as a conditional GAN where $X$ represents time
lags and/or other covariates --- virtually no additional effort is
needed compared to the standard GAN algorithm implementation. The
main difference between these two papers and ours lies in the specific
loss functions we employ, as well as allowing for weak temporal dependence.
We also mention \citet{Haas:Richter:2020}, who provide statistical
theory for conditional and unconditional Wasserstein GAN for dependent
data. A difference to their approach is that the criterion function
in our method does not depend on tuning parameters.

\subsection{Existing methods for predictive distributions of time series}

A large part of the time-series literature has modeled the predictive
distribution through observation-driven models (in the terminology
of \citealp{Cox:1981}), where conditional distribution features such
as moments are updated deterministically as a function of past data.
The classical location-scale paradigm, exemplified by ARMA models
combined with (G)ARCH models \citep{Engle:1982,Bollerslev:1986},
has been extended beyond the conditional mean and variance to allow
for time-varying skewness, kurtosis, and other shape parameters \citep{Hansen:1994}.
In a finance context, there is considerable evidence that the unconditional
distribution of returns cannot be characterized by mean and variance
alone. \citet{Harvey:Siddique:2000} argue that, everything else equal,
investors should prefer right-skewed portfolios to left-skewed ones.
This has motivated either the adoption of richer innovation distributions
or increasingly sophisticated conditional specifications, with a shift
towards modeling the full predictive distribution. A large number
of these contributions can also be regarded as special cases of the
Generalized Autoregressive Score (GAS) framework put forward by \citet{Creal:Koopman:Lucas:2013}.
In the multivariate setting, convenient formulations for the predictive
distribution can also be achieved using conditional copulas (\citealp{patton2013copula}).

In parallel, a substantial body of work adopts a parameter-driven
perspective, in which distributional dynamics arise from latent states.
State-space models, stochastic volatility, and related specifications
allow for a high degree of flexibility by introducing unobserved components
that govern the evolution of the conditional distribution \citep{Harvey:1990}.
While this framework is attractive from a modeling standpoint, inference
and prediction typically require filtering and/or smoothing techniques.
Exact results are available only in special cases, such as linear-Gaussian
systems handled by the Kalman filter; for general nonlinear, non-Gaussian
processes, practitioners concerned about the predictive distribution
rely on approximate or simulation-based methods, including nonlinear
filtering and particle-based algorithms \citep{Doucet:DeFreitas:Gordon:2001}.

Another strand of the literature seeks to avoid strict parametric
assumptions by modeling conditional distributions in a semiparametric
or nonparametric fashion. Kernel-based conditional density estimation,
as developed for dependent data by \citet[chapter 6]{Li:Racine:2007},
provides flexible distributional estimates at the cost of potential
curse of dimensionality issues. Another popular approach is to rely
on mixture-based models, where the conditional density is represented
as a combination of simpler components. Markov-switching models \citep{Hamilton:1989}
and more recent mixture-based dynamic models (e.g., \citealp{Wong:Li:2000,Kalliovirta:Meitz:Saikkonen:2016})
offer a compromise between flexibility and structure.

More recently, the literature has increasingly turned to modern machine
learning methods to tackle conditional density estimation (CDE) while
mitigating the traditional curse of dimensionality. For example, \citet{Izbicki:Lee:2016}
introduce a flexible, nonparametric methodology for CDE using orthogonal
basis expansions that converts the density estimation problem into
a more tractable regression task. Alongside such advancements, deep
learning techniques --- such as mixture density networks, normalizing
flows, and conditional variational autoencoders (see, e.g., \citealp{Rothfuss:2019})
--- have become prominent tools for modeling complex conditional
distributions without restrictive parametric forms. While these contemporary
methods share our objective of flexibly capturing data dependencies,
our adversarial framework distinguishes itself by circumventing explicit
likelihood evaluation or density estimation entirely, focusing instead
on directly generating samples from the predictive distribution.

Many of these observation-driven and parameter-driven frameworks admit
Bayesian counterparts, in which uncertainty about parameters and latent
states is propagated into the predictive distribution. One advantage
of the Bayesian framework is that it offers a logically coherent framework
to incorporate parameter uncertainty \citep{Geweke:Amisano:2010,Geweke:Amisano:2011}.
Except in conjugate settings, Bayesian inference relies on MCMC or
related simulation techniques, with predictive analysis proceeding
via posterior sampling. This simulation-based view emphasizes that
once samples from the relevant distribution are available, any byproduct
of the posterior distribution, including the predictive distribution,
can be obtained without substantial additional effort. While sharing
this emphasis on sampling-based inference for predictive distributions,
we rely on adversarial estimation rather than posterior-based simulation,
yielding models in which the predictive distribution can be sampled
directly once estimation is complete, without recourse to filtering
or posterior simulation.

\subsection{Simulation-based estimation}

We also briefly point out that the GPD framework can be interpreted
as a simulation-based estimation method that in a way generalizes
classical approaches such as the simulated method of moments (SMM;
\citealp{Pakes:Pollard:1989}), indirect inference \citep{Gourieroux:Monfort:Renault:1993},
and the efficient method of moments \citep{Gallant:Tauchen:1996}.
In these traditional frameworks, the researcher aims to estimate the
parameters $\gamma$ of a generative structural model by matching
features of the observed data with those of simulated data.

To make this connection concrete, consider the standard SMM approach.
The estimator minimizes a distance metric between a set of user-specified,
fixed moment functions $m(\cdot)$ evaluated on real and simulated
data:
\[
\min_{\gamma}\,\|\mathbb{E}\left[m(Y,X)\right]-\mathbb{E}\left[m(\text{Gen}(Z,X,\gamma),X)\right]\|_{W}^{2},
\]
where $\|u\|_{W}^{2}=u'Wu$ and $W$ is a positive definite weighting
matrix. The limitation of SMM and related methods is the ex-ante selection
of the moment functions $m(\cdot)$ --- or the choice of an auxiliary
model in indirect inference. A poorly specified auxiliary model or
an arbitrary selection of moments may fail to capture the most informative
features of the true data distribution.

As pointed out by \citet{Kaji:Manresa:Pouliot:2023}, adversarial
estimation overcomes this limitation by nesting the matching problem
within a minimax game. Instead of relying on fixed moments, the discriminator
acts as an adaptive test function that actively searches for the features
that most effectively distinguish between the true data and the simulated
outcomes. The estimation problem transforms into finding a generator
that performs well against a worst-case discriminator chosen from
a broad class $\mathcal{D}$:
\[
\min_{\gamma}\sup_{D\in\mathcal{D}}\mathbb{E}[\ln D(Y,X)]+\mathbb{E}[\ln(1-D(\text{Gen}(Z,X,\gamma),X))].
\]
In this light, the generator remains our structural model for the
predictive distribution, while the discriminator serves as a flexible,
jointly estimated auxiliary model. By continually adapting during
the training process, the discriminator forces the generator to match
an increasingly sophisticated set of data features, allowing for a
much richer and more flexible approximation of the predictive distribution
than classical methods allow.

\section{The Generative Predictive Distribution (GPD) method\label{sec:gpd}}

In this Section, we describe the proposed GPD methodology to approximate
the predictive distribution of a time series. The main ingredients
of GPD are a generative representation --- called \emph{generator}
in the GAN literature --- to be estimated, a \emph{discriminator}
that is used to compare the distribution of observed data to that
of simulated data generated by the model, and a minimax-type \emph{criterion
function}. Note that this is somewhat comparable to familiar extremum
estimation: a model to be estimated (now called the generator) and
a criterion function to be optimized (now in a minimax sense). The
main difference compared to extremum estimation is that the criterion
function for the generator is not readily available and must be approximated
using an auxiliary model (now called the discriminator). In this sense,
the procedure also shares an important similarity with simulation-based
estimation and indirect inference. The method we present below, which
involves a conditional GAN with MLP architecture and a relativistic
criterion function (that extends the work of \citealp{Jolicoeur:2019}),
is to the best of our knowledge new to the literature, albeit closely
related to the methods introduced in \citet{Zhou:etal:2023} and \citet{song:etal:202X}.

We proceed by defining each ingredient (namely, generator, discriminator,
and the criterion function) separately. In what follows, let $\left\{ Y_{t}\right\} _{t=1}^{n}$
be an observed time series of interest with dimension $d_{Y}$. At
each time $t$, a $d_{X}$-dimensional vector of conditioning variables
$X_{t}$, which may include lagged values of $Y_{t}$, is available
to the practitioner. For example, the analogue of a linear Gaussian
vector autoregressive process of order $p$ (VAR($p$)) is obtained
by letting $X_{t}=(1,Y{}_{t-1},\dots,Y{}_{t-p})$.\footnote{For the sake of uncluttered notation, in what follows we allow ourselves
to write $a=(a_{1},...,a_{n})$ for the (column) vector $a$ where
the components $a_{i}$ maybe either scalars or vectors (or both).} Assumptions regarding $Y_{t}$ and $X_{t}$ will be given below. 

\subsection{Generator\label{subsec:generator}}

We are interested in the predictive distribution of $Y_{t}$ given
$X_{t}$. In a setting where the pairs $(Y_{t},X_{t})$ are IID, it
was recently pointed out by \citet{Zhou:etal:2023} (see also \citealp{song:etal:202X})
that this predictive distribution could be represented using a certain
generative formulation. This representation is based on a folklore
result from measure theoretic probability going under various names
such as ``transfer theorem'' (\citealp[Thm 8.17]{Kallenberg:2021})
or ``noise outsourcing lemma'' (\citealp[Lemma 3.1]{Austin:2015})
and dating back to at least \citet{Aldous:1981}. A formulation suitable
for the present time series setting is as follows.

We introduce some notation following \citet[Ch 8]{Kallenberg:2021}.
Let $(\Omega,\mathcal{A},P)$ be a probability space. Suppose $Y_{t}$
and $X_{t}$ take values in $\mathcal{Y}\subseteq\mathbb{R}^{d_{Y}}$
and $\mathcal{X}\subseteq\mathbb{R}^{d_{X}}$ with these spaces being
endowed with their Borel $\sigma$-algebras $\mathcal{B}(\mathcal{Y})$
and $\mathcal{B}(\mathcal{X})$ and with $Y_{t}$ and $X_{t}$ being
measurable mappings from $(\Omega,\mathcal{A})$ to their respective
state spaces. We define a conditional distribution of $Y_{t}$, given
$X_{t}$, as a random measure of the form
\[
\mu(X_{t},B)=P\left\{ Y_{t}\in B\mid X_{t}\right\} \quad\text{a.s.},\quad B\in\mathcal{B}(\mathcal{Y}),
\]
for a probability kernel $\mu:\mathcal{X}\to\mathcal{\mathcal{Y}}$.
For brevity we also denote $\mu(x,\cdot)=P_{Y_{t}\mid X_{t}=x}$,
i.e., a measure in $\mathcal{B}(\mathcal{Y})$.
\begin{lem}
\noindent\label{lem:representation}Suppose $\left\{ Y_{t},X_{t}\right\} _{t\in\mathbb{Z}}$
is a strictly stationary stochastic process and $\left\{ Z_{t}\right\} _{t\in\mathbb{Z}}$
is an IID sequence of $d_{Z}$-dimensional $\mathcal{N}(0,I)$ random
vectors such that $Z_{t}$ is independent of $X_{t}$. Then there
exists a function $G:\mathbb{R}^{d_{Z}+d_{X}}\to\mathbb{R}^{d_{Y}}$
such that
\[
(G(Z_{t},X_{t}),X_{t})\overset{d}{=}(Y_{t},X_{t})\qquad\text{and}\qquad G(Z_{t},x)\sim P_{Y_{t}\mid X_{t}=x},\quad x\in\mathcal{X}.
\]
\end{lem}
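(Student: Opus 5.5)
The plan is to build $G$ directly as a conditional quantile transform (a Knothe--Rosenblatt construction) applied to the Gaussian noise, rather than quote the transfer theorem abstractly. Stationarity makes the joint law of $(Y_t,X_t)$, and hence the kernel $\mu$, the same for every $t$. So it suffices to construct one measurable $G$ from the law of $(Y_0,X_0)$ and then check the claims at an arbitrary $t$.

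\emph{Step 1: kernel and regular conditional laws.} Since $\mathcal{Y}\subseteq\mathbb{R}^{d_Y}$ is Borel in a Polish space, a regular conditional distribution $\mu(x,\cdot)=P_{Y_t\mid X_t=x}$ exists. By stationarity it can be chosen independent of $t$. Write $Y_t=(Y_{t,1},\dots,Y_{t,d_Y})$. By disintegrating repeatedly, we pick measurable kernels for the law of $Y_{t,j}$ given $(X_t,Y_{t,1},\dots,Y_{t,j-1})$, with conditional distribution functions $F_j(\cdot\mid x,y_{1:j-1})$. Each $F_j$ is jointly measurable in $(s,x,y_{1:j-1})$, and we choose versions for which this holds.

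\emph{Step 2: construct $G$.} We need $d_Z\ge1$. If $d_Z<d_Y$, we generate $d_Y$ independent uniforms from the binary digits of $\Phi(Z_{t,1})$ through a fixed Borel isomorphism $[0,1]\to[0,1]^{d_Y}$. Otherwise we set $U_j=\Phi(Z_{t,j})$ for $j\le d_Y$. In either case we get a measurable map $u(z)$ with $u(Z_t)\sim\mathrm{Unif}[0,1]^{d_Y}$. Let $F_j^{-1}(v\mid\cdot)=\inf\{s:F_j(s\mid\cdot)\ge v\}$ and define recursively
\[
G_1(z,x)=F_1^{-1}(u_1(z)\mid x),\qquad G_j(z,x)=F_j^{-1}\bigl(u_j(z)\mid x,G_1(z,x),\dots,G_{j-1}(z,x)\bigr).
\]
Set $G=(G_1,\dots,G_{d_Y})$. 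Measurability follows because $\{(v,w):F_j^{-1}(v\mid w)\le s\}=\{v\le F_j(s\mid w)\}$ is measurable for every $s$. On the null set where the kernels behave badly we may set $G$ equal to a constant.

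\emph{Step 3: distributional claims.} Fix $x$. The uniforms $u_j(Z_t)$ are independent, and the quantile transform has the property that $F^{-1}(U)\sim F$. Induction on $j$ then shows that $G(Z_t,x)$ has first marginal $F_1(\cdot\mid x)$ and that its successive conditionals are the $F_j$. Hence $G(Z_t,x)\sim\mu(x,\cdot)$, which is the second claim. For $\mu$-a.e.\ $x$ this holds by construction, and we redefine the kernel on the exceptional set so that it holds for all $x\in\mathcal{X}$. For the first claim, independence of $Z_t$ and $X_t$ together with Fubini gives, for Borel $A,B$,
\[
P\{G(Z_t,X_t)\in B,\,X_t\in A\}=\int_A P\{G(Z_t,x)\in B\}\,P_{X_t}(dx)=\int_A\mu(x,B)\,P_{X_t}(dx)=P\{Y_t\in B,X_t\in A\}.
\]
A $\pi$--$\lambda$ argument extends this to all Borel sets of the product space.

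\emph{Main obstacle.} I expect the hardest step to be measurability, that is, choosing jointly measurable versions of the sequential conditional distribution functions so that $G$ is Borel on $\mathbb{R}^{d_Z+d_X}$. My first approach is to define $F_j(s\mid w)$ for rational $s$ via regular conditional probabilities, then take right limits, with a fixed default distribution on the null set where monotonicity fails. If this becomes cumbersome, the fallback is to invoke \citet[Thm 8.17]{Kallenberg:2021} directly to obtain a measurable $G$ with $(G(U,X_t),X_t)\overset{d}{=}(Y_t,X_t)$ for $U$ uniform independent of $X_t$, and then compose with $\Phi$ as above.
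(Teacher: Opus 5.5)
Your argument is correct, but it takes a more hands-on route than the paper. The paper first fixes the time-invariant kernel $\mu$, using existence and a.e.\ uniqueness from Kallenberg's Theorem 8.5 together with stationarity. It then cites Kallenberg's randomization lemma (Lemma 4.22) to get a measurable $f:\mathcal{X}\times[0,1]\to\mathcal{Y}$ with $f(x,U)\sim\mu(x,\cdot)$ for \emph{every} $x$, and sets $G(z,x)=f(x,\Phi(z_{1}))$. The joint claim then comes from the proof of the transfer theorem (Theorem 8.17), which amounts to your Fubini computation. You instead reprove that randomization lemma in the Euclidean case via sequential conditional quantile transforms.

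The paper's route is shorter and uses only $Z_{t,1}$, so no digit-splitting is needed when $d_Z<d_Y$. (The standard proof of the lemma embeds $\mathcal{Y}$ in $[0,1]$ by a Borel isomorphism and applies a single one-dimensional quantile transform.) Since the given kernel is realized directly, no change of version is needed either.

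In your route, the sequential kernels compose to $\mu(x,\cdot)$ only for $P_{X}$-a.e.\ $x$ (not ``$\mu$-a.e.''), so your final redefinition is genuinely required. It is legitimate. The cleanest way is to take $x\mapsto P\{G(Z_t,x)\in\cdot\}$ itself as the version of $P_{Y_t\mid X_t=x}$: it is a kernel by joint measurability of $G$, and it agrees with $\mu$ off a $P_X$-null set.

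What your route buys is a self-contained, explicit triangular (Knothe--Rosenblatt) map. This is the same construction the paper uses in Appendix \ref{subsec:continuity} to argue that $G$ can be chosen continuous under density conditions, whereas the Borel-isomorphism map behind Lemma 4.22 is highly discontinuous.

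Two small remarks. First, for $d_Z<d_Y$ it is specifically the binary-digit interleaving map that preserves Lebesgue measure, not an arbitrary Borel isomorphism $[0,1]\to[0,1]^{d_Y}$. Second, the measurability step you flag as the main obstacle is easy. Take $F_j(s\mid w)=\kappa_j(w,(-\infty,s])$ for a regular conditional distribution $\kappa_j$, which exists by the same Polish-space argument as in your Step 1. Then measurability in $w$ for each fixed $s$ is all your inverse-set identity requires.
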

Existing variants of this result (that we are aware of) are formulated
for random elements $(Y,X)$ without a time index, whereas above $(Y_{t},X_{t})$
does depend on $t$ but crucially the functional form of $G$ does
not. For completeness, a proof is given in Appendix \ref{subsec:representation}.

The key message of Lemma \ref{lem:representation} is that the predictive
distribution of $Y_{t}$ given $X_{t}$ can be expressed in the generative
representation $G(Z_{t},X_{t})$. For a given vector of conditioning
variables $X_{t}=x$, this generative representation is simply a function
mapping the innovation term $Z_{t}$ to the predictive distribution
$P_{Y_{t}\mid X_{t}=x}$. This explains the nomenclature \emph{generative}:
once the mapping $G(\cdot,\cdot)$ is available, draws from the distribution
$P_{Y_{t}\mid X_{t}=x}$ can then be obtained by simulating the Gaussian
$Z_{t}$ and calculating $G(Z_{t},x)$. In practice, the mapping $G(\cdot,\cdot)$
is of course unknown and must be estimated. Lemma \ref{lem:representation}
provides the rationale for approximating the predictive distribution
of $Y_{t}$ given $X_{t}$, or in other words the function $G(Z_{t},X_{t})$,
using a suitable parametric formulation. We note already now that
the function $G$ is highly non-unique and its parametric approximations
will not be identified --- we will return to this issue later in
Section \ref{subsec:Consistency}. Furthermore, note that Lemma \ref{lem:representation}
is quite general: It applies with $Y_{t}$ either uni- or multivariate,
with $X_{t}$ containing additional covariates or not, where $Y_{t}$
and $X_{t}$ can be continuous, discrete, or a combination of both,
and for either one-step or multiple-steps ahead predictive distributions.
A second key message of Lemma \ref{lem:representation} is that the
representation of the predictive distribution $G(Z_{t},X_{t})$ is
jointly distributed with $X_{t}$ as $P_{Y_{t},X_{t}}$. This suggests
that it is theoretically possible to learn about $G$ from the joint
distribution of $(Y_{t},X_{t})$. This is important, because while
the empirical version of $P_{Y_{t},X_{t}}$ is observed, in general
the empirical version of $P_{Y_{t}\mid X_{t}=x}$ is not.

One issue warranting a remark is the (potential) continuity of $G$.
If the joint distribution of $(Y_{t},X_{t})$ admits a strictly positive
and continuous density on a connected support, the existence of a
continuous generative map $G$ is guaranteed (via the multivariate
inverse function theorem applied to the Rosenblatt transformation,
see e.g. \citealp[Thm 8.18]{Davidson:1994} and \citealp{Rosenblatt:1952};
we illustrate the construction of such a continuous mapping in Appendix
\ref{subsec:continuity}). This continuity provides the theoretical
justification for approximating $G$ with neural networks, appealing
to standard universal approximation theorems \citep{hornik1989multilayer}. 

Another issue worth a remark is the choice and dimension of the distribution
of $Z_{t}$ ($d_{Z}$-dimensional Gaussian in Lemma \ref{lem:representation}).
The standard normal quantile function (applied element-wise) provides
a continuous mapping from standard uniform random variables to a multivariate
standard normal, suggesting that our construction does not crucially
depend on the choice of a specifically Gaussian reference distribution.
As for the dimension $d_{Z}$, although even univariate noise can
be continuously mapped to multivariate noise (using so-called space-filling
curves), such maps are unlikely to be well-behaving and some care
should be exercised when choosing the dimension $d_{Z}$. In our practical
applications, we use $d_{Z}=d_{Y}$.\textcolor{red}{{} }

With the generative representation of Lemma \ref{lem:representation}
in hand, we next approximate the mapping $G(\cdot,\cdot)$ therein
by the \emph{generator}, a parametric map $G_{\gamma}:\mathbb{R}^{d_{Z}+d_{X}}\to\mathbb{R}^{d_{Y}}$
indexed by $\gamma\in\Gamma\subset\mathbb{R}^{d_{\gamma}}$. Let $Z_{t}\overset{\mathrm{IID}}{\sim}\mathcal{N}\left(0,I\right)$
be a $d_{Z}-$dimensional innovation. We focus on a generator based
on the multilayer perceptron (MLP)
\begin{equation}
G_{\gamma}\left(Z_{t},X_{t}\right)=\gamma_{O}\circ a\circ\gamma_{L_{G}}\circ\cdots\circ a\circ\gamma_{1}\left(Z_{t},X_{t}\right).\label{eq:generator}
\end{equation}
Here the layers $\gamma_{1}:\mathbb{R}^{d_{Z}+d_{X}}\to\mathbb{R}^{H_{G}}$,
$\gamma_{\ell}:\mathbb{R}^{H_{G}}\to\mathbb{R}^{H_{G}}$, $\ell=2,\dots,L_{G}$,
and $\gamma_{O}:\mathbb{R}^{H_{G}}\to\mathbb{R}^{d_{Y}}$ are affine
functions, possibly including an intercept, where the number of layers
$L_{G}$ and the number of hidden units per layer $H_{G}$ are chosen
by the modeler. The pre-specified nonlinear function $a(\cdot)$ is
known as the activation function, and $a(\cdot)$ is applied element-wise
when the argument is a matrix. Our choice is $a(u)=\max\{u,0\}$,
often called the rectified linear unit, or ReLU for short. Other common
choices are $a(u)=1/(1+\exp(-u))$, also known as \emph{sigmoid},
and $a(u)=\tanh(u)$, though we do not cover these cases explicitly.
ReLU activations are the current industry standard and\textcolor{red}{{}
}have been shown\textcolor{red}{{} }to have attractive properties in
terms of their expressive power \citep{Yarotsky2017}. The \emph{output}
layer, $\gamma_{O}$, is a linear transformation of the hidden layer.
In line with the strict stationarity assumed in Lemma \ref{lem:representation},
this particular generator is only able to sample from a time-invariant
predictive distribution. This is less restrictive than it might appear
at first sight, because this still allows for rich but nevertheless
stationary dynamics in the conditional distribution (e.g., ARMA type
variation in the conditional mean, GARCH type variation in the conditional
variance, etc.). This is the same architecture used in \citet{Zhou:etal:2023},
with the slight difference that we allow $X_{t}$ to include lags
of $Y_{t}$.

\subsection{Discriminator\label{subsec:discriminator}}

The second key component of the GPD framework is the discriminator
function, which plays the role of an auxiliary model used to compare
the distribution of observed data to that of simulated data generated
by the model. Formally, the discriminator is a measurable function
$D_{\delta}:\mathbb{R}^{d_{Y}+d_{X}}\to\mathbb{R}$, indexed by a
finite-dimensional parameter vector $\delta\in\Delta$, and evaluated
at pairs $\left(Y_{t},X_{t}\right)$. In this paper, the discriminator
is specified as a multilayer perceptron (MLP) with ReLU activation.
Let $H_{D}$ denote the number of hidden units per layer and $L_{D}$
the number of hidden layers. Specifically,
\begin{equation}
D_{\delta}\left(Y_{t},X_{t}\right)=\delta_{O}\circ a\circ\delta_{L_{D}}\circ\cdots\circ a\circ\delta_{1}\left(Y_{t},X_{t}\right),\label{eq:discriminator}
\end{equation}
where $\delta_{1}:\mathbb{R}^{d_{Y}+d_{X}}\to\mathbb{R}^{H_{D}}$,
$\delta_{\ell}:\mathbb{R}^{H_{D}}\to\mathbb{R}^{H_{D}}$, $\ell=2,\dots,L_{D}$
and $\delta_{O}:\mathbb{R}^{H_{D}}\to\mathbb{R}$ are linear maps,
possibly including an intercept. Again, only the parameters in the
linear maps are estimated. This is also the same architecture employed
by \citet{Zhou:etal:2023}.

From an econometric perspective, the discriminator defines a class
of test functions used to assess the adequacy of the generator. When
this class is sufficiently rich, the minimax problem (to be formally
defined in the next subsection) forces the generator-induced conditional
distribution to match the true conditional distribution along a wide
range of features, including nonlinear and higher-order characteristics.
In population, the discriminator can detect any systematic discrepancy
between the observed and simulated conditional distributions that
lies within its function class.

The MLP specification is particularly convenient for this role. Neural
networks with standard non-polynomial activation functions are universal
approximators \citep{hornik1989multilayer}, allowing the discriminator
to represent complex, non-linear functions of $(Y_{t},X_{t})$ that
characterize the underlying data distribution. Crucially, for a fixed
architecture (fixed number of layers and units per layer), the resulting
function class possesses a finite Vapnik-Chervonenkis (VC) dimension
\citep{Bartlett:etal:2019}. This bounded complexity is essential
for establishing uniform convergence results; it ensures that the
discriminator's empirical performance is close to the best performance
in its class as the sample size grows. Furthermore, this framework
allows for the derivation of consistency results even under conditions
of weak dependence, such as $\beta$-mixing processes, which are common
in time series applications.

It is important to emphasize that the discriminator is not interpreted
as a probabilistic classifier. Instead, it should be viewed as an
auxiliary device that adaptively selects informative features of the
data against which the generator is evaluated. In this sense, the
GPD framework generalizes classical simulation-based estimation methods.
For example, if the discriminator is restricted to be linear in $(Y_{t},X_{t})$,
the minimax problem reduces to a form of conditional moment matching.
Allowing for nonlinear discriminators extends this idea by endogenously
selecting moments that are most informative for distinguishing observed
and simulated data. While the discriminator is essential for estimation,
it plays no direct role once the generator has been estimated. All
subsequent analysis, such as simulation, forecasting, and computation
of functionals of the predictive distribution, is conducted using
the generator alone.

\subsection{Criterion function\label{subsec:criterion}}

Estimation of the generator and the discriminator is based on a minimax
problem whose goal is to implicitly compare the model-generated data
to observed data. Based on Lemma \ref{lem:representation}, this amounts
to finding a generator $G$ so that the distribution of $(G(Z_{t},X_{t}),X_{t})$
matches that of $(Y_{t},X_{t})$. The population criterion function
we employ is defined as
\begin{equation}
f(\gamma,\delta)=\mathbb{E}\bigl[\ln\sigma\bigl(D_{\delta}(Y_{t},X_{t})-D_{\delta}(G_{\gamma}(Z_{t},X_{t}),X_{t})\bigr)\bigr]+\ln2,\label{eq:f}
\end{equation}
where $\sigma(u)=1/(1+\exp(-u))$ is the sigmoid function and the
expectation is taken with respect to the joint distribution of $\left(X_{t},Y_{t},Z_{t}\right)$.
The population game is formulated as follows:
\begin{equation}
\adjustlimits\inf_{\gamma\in\Gamma}\sup_{\delta\in\Delta}\,f(\gamma,\delta).\label{eq:population minimax problem}
\end{equation}
This formulation defines a two-player sequential game between the
generator and the discriminator. For a fixed generator parameter $\gamma$,
the discriminator chooses $\delta$ to maximize its ability to distinguish
observed outcomes from simulated ones. The generator, who is given
first-mover advantage, anticipates the move of the discriminator and
chooses $\gamma$ to minimize its worst-case scenario. We note that
in the traditional convex-concave setting ($f(\cdot,\delta)$ convex
for all fixed $\delta\in\Delta$ and $f(\gamma,\cdot)$ concave for
all fixed $\gamma\in\Gamma$), the classical von Neumann minimax theorem
implies that $\inf_{\gamma\in\Gamma}\sup_{\delta\in\Delta}f(\gamma,\delta)=\sup_{\delta\in\Delta}\inf_{\gamma\in\Gamma}f(\gamma,\delta)$
under mild conditions. In contrast to this, \citet{jin2020what} and
others have emphasized that in most modern machine learning applications
$f$ is non-convex and non-concave and the order in which minimization
and maximization are performed matters.

Our choice of the specific functional form of the criterion functions
is somewhat different from those used by \citet{Zhou:etal:2023} and
\citet{song:etal:202X}. \citet{Zhou:etal:2023} use a criterion based
on the Kullback-Leibler divergence which can (in our notation) be
expressed as
\begin{equation}
f_{\text{Zhou}}\left(\gamma,\delta\right)=\mathbb{E}[D_{\delta}(G_{\gamma}(Z_{t},X_{t}),X_{t})]-\mathbb{E}[\exp(D_{\delta}(Y_{t},X_{t}))].\label{eq:f_Zhou}
\end{equation}
\citet{song:etal:202X} use a criterion based on the 1-Wasserstein
distance given by
\begin{equation}
f_{\text{Song}}\left(\gamma,\delta\right)=\mathbb{E}[D_{\delta}(G_{\gamma}(Z_{t},X_{t}),X_{t})]-\mathbb{E}[D_{\delta}(Y_{t},X_{t})],\label{eq:f_Song}
\end{equation}
and combine this with a regularization term based on the least squares
loss (see their Section 2.2 for details). It is clear that criteria
(\ref{eq:f}), (\ref{eq:f_Zhou}), and (\ref{eq:f_Song}) all measure
discrepancies between model-generated outcomes and observed outcomes
but in slightly different ways. Intuitively, the discriminator assigns
higher values to realizations that are more likely to originate from
the true data-generating process than from the generator distribution.
Many other formulations of the adversarial criterion would of course
also be possible.

The specific functional form of the criterion (\ref{eq:f}) we employ
is motivated by a number of considerations. First, GAN-type problems
are typically notoriously difficult to solve due to their adversarial
nature. The so-called relativistic GAN criterion (\ref{eq:f}) was
proposed by Jolicoeur-Martineau \citeyearpar{Jolicoeur:2019,Jolicoeur:2020}
with the aim of improving stability of numerical optimization, and
further advocated by \citet{Huang:etal:2024} as a well-behaved criterion
that is amenable to optimization without the need of several ad-hoc
tricks commonly employed in GAN estimation problems. We have chosen
the criterion of form (\ref{eq:f}) mainly due to our aim of proposing
a general method that works with computational ease with a criterion
that requires no choice of tuning parameters, at least in our applications.\footnote{\citet{Huang:etal:2024} document divergence of the training procedure
of relativistic GAN in an application to handwritten digit recognition.
That exercise concerns a high-dimensional distribution with 1,000
modes, which is fundamentally different in nature from the applications
considered here. While in our applications we found no need for introducing
additional regularization terms, we note that in large-scale applications
the use of regularization is probably unavoidable.} Another reason for this choice is that it facilitates our theoretical
analysis. In the next section, we will see that using criterion (\ref{eq:f})
leads to consistent estimation of the generator and discriminator
parameters. Furthermore, we show below that this criterion, when maximized
with respect to $D$, corresponds to the Jensen-Shannon (JS) distance
between two probability measures involving the real and fake predictive
distributions. Lemma \ref{lem:criterion} below, proven in Appendix
\ref{subsec:criterionproof}, formalizes this intuition. 

Before stating the lemma, we introduce some additional notation. Suppose
here that $Y$, $X$, and $Z$ are any random variables with $Z$
being independent of $Y$ and $X$. Let $P_{Y,X}$, $P_{X}$ and $P_{Z}$
be the distributions of $(Y,X)$, $X,$ and $Z$, respectively, where
$Z$ takes values in $\mathcal{Z}\subseteq\mathbb{R}^{d_{Z}}$. For
any measurable function $\phi$, let
\begin{align*}
\mathbb{E}[\phi(Y,G(X,Z),X)] & =\int_{\mathcal{X}}\int_{\mathcal{Y}}\int_{\mathcal{Z}}\phi(y,G(z,x),x)P_{Y,X}(dy,dx)P_{Z}(dz)\\
 & =\int_{\mathcal{X}}\int_{\mathcal{Y}}\int_{\mathcal{Y}}\phi(y,g,x)\mu(x,dy)\mu_{G}(x,dg)P_{X}(dx),
\end{align*}
where $\mu(x,\cdot)=P_{Y\mid X=x}$, and $\mu_{G}(x,\cdot)=Q_{G\mid X=x}$
is the push-forward measure of $P_{Z}$ under the measurable mapping
$z\mapsto G(z,x)$. For two generic probability measures $P$ and
$Q$ defined on the same measurable space $\mathcal{S}$, with $P$
absolutely continuous with respect to $Q$, the Kullback-Leibler divergence
is defined as $\mathrm{KL}\left(P\parallel Q\right)=\int_{s\in\mathcal{S}}\ln\frac{dP}{dQ}(s)P(ds),$
where $\frac{dP}{dQ}$ is the Radon-Nikodym derivative of $P$ with
respect to $Q$. Moreover, the Jensen-Shannon distance is defined
as
\[
\mathrm{JS}\left(P\parallel Q\right)=\frac{1}{2}\mathrm{KL}\left(P\parallel\frac{1}{2}(P+Q)\right)+\frac{1}{2}\mathrm{KL}\left(Q\parallel\frac{1}{2}(P+Q)\right).
\]
We can now state the following result.
\begin{lem}
\label{lem:criterion}Let $\mathcal{L}(G,D)=\mathbb{E}[\ln\sigma(D(Y,X)-D(G(Z,X),X))]+\ln2.$
It holds that
\begin{equation}
\sup_{D}\mathcal{L}(G,D)=\mathrm{JS}\left(P_{1}\parallel P_{2}\right),\label{eq:JS}
\end{equation}
where the supremum over $D$ is taken over the class of measurable
functions, $P_{1}=P_{X}\otimes\mu\otimes\mu_{G}$ and $P_{2}=P_{X}\otimes\mu_{G}\otimes\mu$
are probability measures defined by $P_{1}(dx,dy,dg)=P_{X}(dx)\mu(x,dy)\mu_{G}(x,dg)$
and $P_{2}(dx,dy,dg)=P_{X}(dx)\mu_{G}(x,dy)\mu(x,dg).$ Moreover,
the supremum is attained by any function $D^{*}$ such that
\begin{equation}
D^{*}(y,x)=\ln\frac{d\mu(x,\cdot)}{d\nu(x,\cdot)}(y)-\ln\frac{d\mu_{G}(x,\cdot)}{d\nu(x,\cdot)}(y)-C(x),\label{eq:D*}
\end{equation}
where $\nu(x,\cdot)=\frac{1}{2}[\mu(x,\cdot)+\mu_{G}(x,\cdot)]$ and
$C(x)$ is a measurable function dependent only on $x$.
\end{lem}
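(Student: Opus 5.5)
The plan is to reduce the problem to the classical pointwise argument of \citet{Goodfellow:etal:2014} on the product space $\mathcal{X}\times\mathcal{Y}\times\mathcal{Y}$, using a symmetry between $P_{1}$ and $P_{2}$. Write $h_{D}(x,y,g)=D(y,x)-D(g,x)$. By the displayed integral identity preceding the lemma,
\[
\mathcal{L}(G,D)=\int\ln\sigma(h_{D})\,dP_{1}+\ln2 .
\]
Let $S(x,y,g)=(x,g,y)$ be the swap map. Then $P_{2}=P_{1}\circ S^{-1}$ and $h_{D}\circ S=-h_{D}$. Hence $\int\ln\sigma(h_{D})\,dP_{1}=\int\ln\sigma(-h_{D})\,dP_{2}$. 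Averaging the two expressions gives
\[
\mathcal{L}(G,D)=\tfrac12\int\ln\sigma(h_{D})\,dP_{1}+\tfrac12\int\ln\bigl(1-\sigma(h_{D})\bigr)\,dP_{2}+\ln2 ,
\]
using $1-\sigma(u)=\sigma(-u)$. This is exactly the form of the original GAN objective, with $P_{1}$ playing the role of the real distribution and $P_{2}$ the role of the fake distribution.

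\textbf{Upper bound.} Let $M=\frac12(P_{1}+P_{2})$, $p=dP_{1}/dM\in[0,2]$ and $q=dP_{2}/dM=2-p$. For each point, maximize $s\mapsto\frac12[p\ln s+q\ln(1-s)]$ over $s\in[0,1]$. The maximizer is $s=p/(p+q)=p/2$. Substituting it back and integrating against $M$ gives
\[
\mathcal{L}(G,D)\le\tfrac12\mathrm{KL}(P_{1}\parallel M)+\tfrac12\mathrm{KL}(P_{2}\parallel M)=\mathrm{JS}(P_{1}\parallel P_{2}),
\]
where the $\ln2$ terms combine with $\ln(p/2)$ and $\ln(q/2)$ to give the KL terms. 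This bound holds for every measurable $D$, since $\sigma(h_{D})$ is just one admissible choice of $s$ at each point.

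\textbf{Attainment.} Equality requires $\sigma(h_{D})=p/2$ $M$-a.s., that is, $h_{D}=\ln(p/q)=\ln\frac{dP_{1}}{dP_{2}}$. The key computation is to identify this likelihood ratio. Let $m=d\mu(x,\cdot)/d\nu(x,\cdot)$ and $m_{G}=d\mu_{G}(x,\cdot)/d\nu(x,\cdot)$, chosen as jointly measurable versions in $(x,y)$; such versions exist by the standard Doob-type result for kernels on Borel spaces. Both $P_{1}$ and $P_{2}$ then have densities with respect to $P_{X}\otimes\nu\otimes\nu$, namely $m(x,y)m_{G}(x,g)$ and $m_{G}(x,y)m(x,g)$ respectively. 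Their log ratio is
\[
[\ln m-\ln m_{G}](x,y)-[\ln m-\ln m_{G}](x,g)=D^{*}(y,x)-D^{*}(g,x)
\]
for $D^{*}$ as in (\ref{eq:D*}). Any function $C(x)$ cancels in this difference, which explains the non-uniqueness in the statement.

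\textbf{Main obstacle.} I expect the hardest step to be the region where $m$ or $m_{G}$ vanishes, that is, where $\mu$ and $\mu_{G}$ are not mutually absolutely continuous. There $D^{*}$ takes the values $\pm\infty$ and $p/2\in\{0,1\}$. I would handle this by allowing extended real-valued $D$ with the conventions $\sigma(\pm\infty)\in\{1,0\}$ and $0\cdot\ln0=0$. I would then check that the bad sets $\{m=0\}$ and $\{m_{G}=0\}$ carry zero $P_{1}$-mass in the relevant coordinates, so the integrand equals the pointwise maximum $M$-a.s. If only finite $D$ are admitted, I would instead show the supremum is approached by truncations $D^{*}_{k}=\max(\min(D^{*},k),-k)$ via monotone convergence.
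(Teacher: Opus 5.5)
Your proposal is correct and follows essentially the paper's route. The swap $y\leftrightarrow g$ (the paper's rewriting of $I(x)$ as an average of the two orderings) turns $\mathcal{L}$ into a Goodfellow-type objective. Pointwise maximization of $a\ln s+b\ln(1-s)$ then gives $s^{*}=a/(a+b)$; your $p/2$ against $M$ is exactly the paper's $a/(a+b)$ against $P_X\otimes\nu\otimes\nu$. Finally, $\ln(a/b)$ splits as $D^{*}(y,x)-D^{*}(g,x)$ and integrates to the JS divergence.

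Your use of jointly measurable Radon--Nikodym versions, and your treatment of the sets where $m$ or $m_G$ vanishes (where $D^{*}=\pm\infty$), add care that the paper's proof omits. In the finite-$D$ fallback, note that $k\mapsto D^{*}_k(y,x)-D^{*}_k(g,x)$ is monotone at each point, but increasing at some points and decreasing at others. So apply monotone convergence separately on the two regions, or use dominated convergence instead.
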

In Lemma \ref{lem:criterion}, $\mathcal{L}(G,D)$ coincides with
the population criterion in \eqref{eq:f}, whenever $G$ and $D$
belong to the function classes defined in \eqref{eq:generator} and
\eqref{eq:discriminator} and $(Y,X,Z)=(Y_{t},X_{t},Z_{t})$. In \citet{Goodfellow:etal:2014},
it is shown that the supremum over $D$ of their population criterion
is (up to constants) the JS distance between real and fake data distributions.
In our case, we also obtain a JS distance, but the real ($\mu$) and
fake ($\mu_{G}$) distributions make an appearance through $P_{1}$
and $P_{2}$. Concretely, $P_{1}$ is the joint distribution of $(X,Y,G(X,Z))$,
while $P_{2}$ is a measure in which, compared to $P_{1}$, the roles
of $\mu=P_{Y\mid X=x}$ and $\mu_{G}=Q_{G\mid X=x}$ are reversed.
Clearly, when $\mu=\mu_{G}$, then $\mathrm{JS}(P_{1}\parallel P_{2})=0$,
and the converse also holds: if $\mathrm{JS}(P_{1}\parallel P_{2})=0$,
then $P_{1}=P_{2}$, which implies that $\mu=\mu_{G}$, ($P_{X}-$almost
surely). The function $D^{*}$, which is not uniquely defined, can
be regarded as a log-density ratio between $\mu$ and $\mu_{G}$,
and the form of $C(x)$ is irrelevant for our purposes as it cancels
out upon taking differences. This specific formulation allows both
the data and generator distributions to be either continuous, discrete,
or a combination of both, and avoids requiring $\mu$ to be absolutely
continuous with respect to $\mu_{G}$, and vice versa. This is particularly
relevant in our case, where in general a ReLU-MLP generator's push-forward
distribution can have atoms, ruling out the existence of a density
with respect to the Lebesgue measure.

\section{Estimation\label{sec:Estimation}}

\subsection{Practical estimation}

In real applications, we would have an observed time series $\left\{ Y_{t}\right\} _{t=1}^{n}$
as well as conditioning variables $\left\{ X_{t}\right\} _{t=1}^{n}$
available to the practitioner. (When $X_{t}$ contains lagged values
of $Y_{t}$, we assume appropriate initial values, such as $Y_{0},\dots,Y_{-p}$
in the case of a vector autoregressive process of order $p$, are
also available.) The sample analogue of the population criterion function
\eqref{eq:f} is
\begin{align*}
\widehat{f}_{n}\left(\gamma,\delta\right) & =\frac{1}{n}\sum_{t=1}^{n}\ln\sigma\bigl(D_{\delta}(Y_{t},X_{t})-D_{\delta}(G_{\gamma}(Z_{t},X_{t}),X_{t})\bigr),
\end{align*}
where $\left\{ Z_{t}\right\} _{t=1}^{n}$ is an IID sequence of latent
innovations independent of the observed data. In our applications,
we choose the normal distribution $\mathcal{N}(0,I)$ with the same
dimension as the dimension of $Y$. However, as we argued above, in
this setup normality of $Z$ is not restrictive, and other choices
are possible. Estimation is based on the approximate solutions to
the corresponding empirical version of the game \eqref{eq:population minimax problem}\textcolor{red}{{}
}given by
\begin{equation}
\min_{\gamma\in\Gamma}\max_{\delta\in\Delta}\,\widehat{f}_{n}\left(\gamma,\delta\right).\label{eq:sample minimax problem}
\end{equation}

Finding a solution, let alone all solutions, to this minimax game
can be a challenging task and a large body of machine learning literature
focuses on this (see, e.g., \citealp{diakonikolas2021efficient},
\citealp{fiez2021}, \citealp{mangoubi2021}, and the references therein).
In this paper, we have carefully chosen the generator, discriminator,
and the criterion function to be rich enough to yield interesting
empirical results in the three applications we present in Section
\ref{sec:empirical}, while at the same time achieving relative computational
ease. Details of the specific algorithm we use are provided in Algorithm
\ref{algo:gpd}.\footnote{It should be remarked that in a large part of the GAN literature there
is a fundamental mismatch between how the problem is formulated (minimax)
and how it is actually solved in practice (alternating min-max, potentially
non-zero sum game). We follow the same practice here, and regard the
practical solution found by Algorithm 1 as a computational trick to
obtain an approximate solution to \eqref{eq:sample minimax problem}.} In this algorithm, we follow the standard machine learning practice
of approximately solving both optimization problems using a variant
of \emph{stochastic gradient descent}. For clarity, we briefly define
the specific jargon used in this literature. The qualifier \emph{stochastic}
indicates that instead of evaluating the average gradient of the criterion
function using all observations, we evaluate it only on a random subset
(or \emph{batch}). We use this stochastic gradient to update the parameters
via the so-called \emph{Adam} optimizer, an updating rule based on
the exponentially weighted moving averages of the gradient\textquoteright s
first and second moments. This rule relies on several hyperparameters,
the most critical being the \emph{learning rate}, which dictates the
step size of the updates. Parameters are updated batch by batch until
the entire sample has been processed --- a full cycle referred to
as an \emph{epoch}. Typically, this procedure continues until a stopping
criterion is met (\emph{early stopping}). For further details on the
Adam optimizer, see \citet{Kingma:Ba:2015}.

The primary modification in Algorithm \ref{algo:gpd}, compared to
standard Adam-based optimization, is that we update the discriminator
and generator in an alternating fashion. For each batch, we first
execute an Adam-based Discriminator step (D-step), followed immediately
by an Adam-based Generator step (G-step). Beyond the inherent complexity
of solving a minimax rather than a standard optimization problem,
this setup introduces an additional challenge: the criterion function
is relative to each player. Consequently, defining a reliable metric
for early stopping becomes a non-trivial task. One practical suggestion
which we follow is to use the Wasserstein distance based on random
projections also known as sliced Wasserstein's distance, or SWD for
short (see Appendix \ref{subsec:swd} for further details).

\begin{algorithm}[tb]
\begin{algorithmic}[1]
\Require Initialize $\gamma$ and $\delta$. Choose Adam optimizer parameters, batch size $B$ and stopping criterion.
\State Split the data in batches of size $B$, i.e. $\left\{Y_b^{(1)},X_b^{(1)}\right\}_{b=1}^B, \ldots,\left\{Y_b^{(n/B)},X_b^{(n/B)}\right\}_{b=1}^B$.
\While{not converged}
\For{$i=1,\dots,n/B$}
\State \textbf{D-step:}
\State \hspace{1em} Draw $\left\{ Z_{b}^{(i)}\right\} _{b=1}^{B}\overset{\text{IID}}{\sim}\mathcal{N}\left(0,I\right)$.
\State \hspace{1em} Compute $\left\{ D_{\delta}(Y_{b}^{(i)},X_{b}^{(i)})\right\} _{b=1}^{B}$ and $\left\{ D_{\delta}(G_{\gamma}(X_{b}^{(i)},Z_{b}^{(i)}),X_{b}^{(i)})\right\} _{b=1}^{B}$.
\State \hspace{1em} Compute discriminator loss
$$
L_{D}(\delta)=-\frac{1}{B}\sum_{b=1}^{B}\ln\sigma\left(
D_{\delta}(Y_{b}^{(i)},X_{b}^{(i)})
-D_{\delta}(G_{\gamma}(X_{\pi(b)}^{(i)},Z_{\pi(b)}^{(i)}),X_{\pi(b)}^{(i)})
\right),
$$
\hspace{4em} where $\pi(\cdot)$ is the random permutation operator.
\State \hspace{1em} Take an Adam step to update $\delta$.
\State \textbf{G-step:}
\State \hspace{1em} Re-draw $\left\{ Z_{b}^{(i)}\right\} _{b=1}^{B}\overset{\text{IID}}{\sim}\mathcal{N}\left(0,I\right)$.
\State \hspace{1em} Compute $\left\{ D_{\delta}(G_{\gamma}(X_{b}^{(i)},Z_{b}^{(i)}),X_{b}^{(i)})\right\} _{b=1}^{B}$.
\State \hspace{1em} Compute generator loss
$$
L_{G}(\delta)=-\frac{1}{B}\sum_{b=1}^{B}\ln\sigma\left(
D_{\delta}(G_{\gamma}(X_{b}^{(i)},Z_{b}^{(i)}),X_{b}^{(i)})
-D_{\delta}(Y_{\pi(b)}^{(i)},X_{\pi(b)}^{(i)})
\right).
$$
\State \hspace{1em} Take an Adam step to update $\gamma$.
\EndFor
\EndWhile
\end{algorithmic}

\caption{}

\label{algo:gpd}
\end{algorithm}

Studying algorithmic convergence and stability of Algorithm \ref{algo:gpd}
would be interesting but is well beyond the scope of this paper. The
key reason for this is that our population game is inherently nonconvex-nonconcave
in its arguments. Nonconvex-nonconcave minimax optimization problems
have gained widespread interest in the machine learning community
in recent years but are notoriously difficult to study. We leave issues
of algorithmic convergence to future research and direct interested
readers to the recent papers of \citet{li2025nonsmooth} and \citet{lin2025two}
for further details and discussion on this topic.

\subsection{Consistency\label{subsec:Consistency}}

We next turn to statistical properties of the solutions of the empirical
version of the problem (\ref{eq:sample minimax problem}). In particular,
we study under which conditions these solutions, say $\hat{\gamma}_{n}$
and $\hat{\delta}_{n}$, are consistent estimators of the corresponding
solutions, say $\gamma_{0}$ and $\delta_{0}$, of the population
minimax problem (\ref{eq:population minimax problem}). Consistency
results for the solutions of minimax problems have previously been
studied only in an IID setting, see e.g. \citet[Theorem 5.9]{Shapiro:etal:2021}
and \citet[Theorem 1]{Meitz:2024}. The developments below extend
these previous results to our current time series setting.

We begin by introducing necessary notation and assumptions. Our first
assumption is a suitable modification of Assumption GAN in \citet{Meitz:2024}.
In this assumption and in what follows, we use either notation $\theta$
or notation $(\gamma,\delta)$ for the elements of the set $\Theta=\Gamma\times\Delta$.
We use $\left\Vert \cdot\right\Vert $ to denote Euclidean distance.
\begin{assumption}
\label{asm:gpd}(Assumption GPD).
\end{assumption}
\begin{enumerate}[label=\theassumption(\alph{enumi})]
\item \label{asm:data}\textcolor{teal}{$\left\{ Y_{t},X_{t}\right\} _{t\in\mathbb{Z}}$}
is a strictly stationary $\beta-$mixing stochastic process with the
same distribution as $\left(Y,X\right)$ and taking values in $\mathcal{Y\times\mathcal{X}},$
with $\mathcal{Y}\subseteq\mathbb{R}^{d_{Y}}$ and $\mathcal{X}\subseteq\mathbb{R}^{d_{X}}$.
Moreover, the $\beta-$mixing coefficients $\beta_{k}$ satisfy
\[
k^{p/(p-2)}(\ln k)^{2(p-1)/(p-2)}\beta_{k}\to0\quad\text{as}\quad k\to\infty\quad\text{for some}\quad2<p<\infty.
\]
\item \label{asm:Z}\textcolor{teal}{$\left\{ Z_{t}\right\} _{t\in\mathbb{Z}}$}
is an IID sequence of random vectors with the same distribution as
$Z$ and taking values in $\mathcal{Z}\subseteq\mathbb{R}^{d_{Z}}$.
\item \label{asm:compact}The set $\Theta=\Gamma\times\Delta\subseteq\mathbb{R}^{d_{\gamma}+d_{\delta}}$
is compact and nonempty.
\item \label{asm:criterion}The generator function $G_{\gamma}$ defined
in \eqref{eq:generator}, the discriminator function $D_{\delta}$
defined in \eqref{eq:discriminator}, and the criterion function $F$
defined by
\begin{align*}
F & (x,y,z,\theta)=\ln\sigma\bigl(D_{\delta}(y,x)-D_{\delta}(G_{\gamma}(z,x),x)\bigr)
\end{align*}
where $\sigma(u)=1/(1+\exp(-u))$, are such that $F(X,Y,Z,\theta)$
is measurable for all $\theta\in\Theta$ and continuous on $\Theta$
with probability one. Moreover, the activation function appearing
in \eqref{eq:generator} and \eqref{eq:discriminator} is $a(u)=\max\{0,u\}$.
\item \label{asm:moments}We have $\mathbb{E}[\left\Vert X_{t}\right\Vert ^{p}]<\infty$,
$\mathbb{E}[\left\Vert Y_{t}\right\Vert ^{p}]<\infty$, and $\mathbb{E}[\left\Vert Z_{t}\right\Vert ^{p}]<\infty$.
\end{enumerate}
Compared to Assumption GAN of \citet{Meitz:2024}, Assumptions \ref{asm:Z}
and \ref{asm:compact} are identical. In contrast, \ref{asm:data}
allows the data to exhibit temporal dependence and imposes a time-invariant
law of $Y_{t}$ given $X_{t}$. Regarding the $\beta-$mixing coefficients,
which quantify how quickly the ``memory'' of the time series decays,
the polynomial decay condition assumed in \ref{asm:data} is sufficiently
mild to accommodate many time series processes encountered in practice
(see, e.g., \citealp{Carrasco:Chen:2002}, \citealp{Meitz:Saikkonen:2008a,Meitz:Saikkonen:2008b},
and the references therein). Assumption \ref{asm:criterion} is sufficient
to ensure three main requirements (in the proof of Theorem 1 below):
(i) that the supremum over $\Delta$ of the criterion is continuous;
(ii) that the criterion function is of sufficiently controlled complexity,
meaning that it belongs to a VC-subgraph class of functions \citep[Section 2.6]{vanderVaart:Wellner:2023};
and (iii) the existence of an envelope function for the class of functions
$F$ parameterized by $\theta\in\Theta$. In our applications, the
relativistic criterion leads to more stable estimation compared to
Wasserstein distance and binary cross-entropy. We focus on the ReLU
activation function to keep proofs as simple as possible, but our
results can be generalized to cover other commonly used activation
functions such as sigmoid or tanh. Assumption \ref{asm:moments} ensures
the existence of $p$ moments of the envelope function of $F$, which
is standard in empirical process theory. We also remark that although
alternative conditions involving the quantile function of the envelope
could be employed to achieve similar results for slightly more general
functions \citep{rio2017asymptotic}, the VC-subgraph property is
general enough for our purposes. Finally, we refer to \citet{Arcones:Yu:1994}
for more detailed definitions of the $\beta$--mixing coefficients
$\beta_{k}$ and the envelope function.

The goal is to estimate $\Theta_{0}$, the set of all solutions to
the population game (\ref{eq:population minimax problem}). This set
can be expressed as
\[
\Theta_{0}=\left\{ \left(\gamma_{0},\delta_{0}\right)\in\Theta:f\left(\gamma_{0},\delta_{0}\right)=\sup_{\delta\in\Delta}f\left(\gamma_{0},\delta\right)=\varphi\left(\gamma_{0}\right),\varphi\left(\gamma_{0}\right)=\inf_{\gamma\in\Gamma}\varphi\left(\gamma\right)=V_{0}\right\} ,
\]
where $\varphi(\gamma)=\sup_{\delta\in\Delta}f(\gamma,\delta)$. As
noted in \citet{Meitz:2024}, we may think of $\Theta_{0}$ as the
level set of minimizers of the population criterion function
\[
Q\left(\theta\right)=\varphi\left(\gamma\right)-\min\left\{ f\left(\gamma,\delta\right),V_{0}\right\} .
\]
This criterion has a minimum of zero, so we write $\Theta_{0}=\left\{ \theta\in\Theta:Q\left(\theta\right)=0\right\} .$
Sample analogues are obtained by replacing expectations with empirical
averages, i.e. 
\[
\widehat{f}_{n}(\theta)=\frac{1}{n}\sum_{t=1}^{n}F(X_{t},Y_{t},Z_{t},\theta),\quad\widehat{Q}_{n}(\theta)=\widehat{\varphi}_{n}(\gamma)-\min\left\{ \hat{f}_{n}(\gamma,\delta),\widehat{V}_{0}\right\} ,
\]
where $\widehat{\varphi}_{n}(\gamma)=\sup_{\delta\in\Delta}\hat{f}_{n}(\gamma,\delta)$
and $\widehat{V}_{0}=\inf_{\gamma\in\Gamma}\sup_{\delta\in\Delta}\hat{f}_{n}(\gamma,\delta)$,
so that the set of (exact) solutions to the empirical minimax problem
(\ref{eq:sample minimax problem}) can be expressed as $\widehat{\Theta}_{n}=\{\theta\in\Theta:\widehat{Q}_{n}\left(\theta\right)=0\}$.
For our theoretical developments, we assume that a suitable algorithm
to estimate $\Theta_{0}$ is available to the researcher (e.g., our
Algorithm \ref{algo:gpd} or any alternative algorithm). A sequence
of non-negative random variables $\tau_{n}\overset{p}{\rightarrow}0$
is introduced to represent the slackness between the exact solution
$\widehat{\Theta}_{n}$ and the approximate solution found by the
algorithm, namely
\[
\widehat{\Theta}_{n}(\tau_{n})=\{\theta\in\Theta:\widehat{Q}_{n}(\theta)\leq\tau_{n}\}.
\]

Some discussion is warranted. The population minimax problem (\ref{eq:population minimax problem})
is likely to have numerous solutions and thus $\Theta_{0}$ is set-valued
and potentially not finite. Key reasons for this are the non-uniqueness
of the function $G$ of Lemma \ref{lem:representation} representing
the predictive distribution of $Y_{t}$ given $X_{t}$ as well as
the inherent non-identifiability of the MLPs used in the generator
and the discriminator. As for the interpretation of $\Theta_{0}$,
the MLP $G_{\gamma}(z,x)$ is a parametric approximation to $G(z,x),$
the generative representation of the predictive distribution of $Y_{t}$
given $X_{t}$, but we do not assume that $G_{\gamma}(z,x)$ would
for some $\gamma$ be equal to $G(z,x)$. In this sense, the elements
of $\Theta_{0}$ do not correspond to any ``true'' parameter values.
Furthermore, as the parameters $\gamma$ and $\delta$ themselves
carry no essential interpretation and as all elements of the set $\Theta_{0}$
are equally valid for approximating the generative representation,
this multitude of solutions is mainly an issue complicating numerical
estimation. Indeed, $\widehat{\Theta}_{n}(\tau_{n})$ is likely to
be set-valued and potentially large. An estimation algorithm searching
for approximate solutions to the empirical minimax problem would typically
return a single solution, say $\hat{\theta}_{n}(\tau_{n})$, and several
runs of the algorithm with different initializations would be required
to find the entire set $\widehat{\Theta}_{n}(\tau_{n})$. 

Now consider consistency. Our goal is to prove that an appropriately
defined distance between the sets $\widehat{\Theta}_{n}(\tau_{n})$
and $\Theta_{0}$ converges to zero in probability. We define the
Hausdorff distance between two sets $A$ and $B$ as
\[
d_{H}\left(A,B\right)=\max\left\{ \sup_{a\in A}d\left(a,B\right),\sup_{b\in B}d\left(b,A\right)\right\} ,\quad\text{with}\quad d\left(a,B\right)=\inf_{b\in B}\left\Vert a-b\right\Vert .
\]
Since $d_{H}\left(\widehat{\Theta}_{n}\left(\tau_{n}\right),\Theta_{0}\right)\overset{p}{\rightarrow}0$
follows from the two conditions 

\begin{equation}
\text{(a)}\:\sup_{\theta\in\widehat{\Theta}_{n}\left(\tau_{n}\right)}d\left(\theta,\Theta_{0}\right)\overset{p}{\rightarrow}0\qquad\text{and}\qquad\text{(b)}\:\sup_{\theta\in\Theta_{0}}d\left(\theta,\widehat{\Theta}_{n}\left(\tau_{n}\right)\right)\overset{p}{\rightarrow}0,\label{eq:onesided-hsdrf}
\end{equation}
it suffices to establish \eqref{eq:onesided-hsdrf} (a) and (b). Intuitively,
(a) means that $\widehat{\Theta}_{n}(\tau_{n})$ is not too large
compared to $\Theta_{0}$, whereas (b) means that $\widehat{\Theta}_{n}(\tau_{n})$
is large enough to cover all of $\Theta_{0}$. 

We can now state our main consistency result (the proof is given in
Appendix \ref{subsec:proof-thm}).
\begin{thm}
\label{thm:consistent}Suppose Assumption \ref{asm:gpd} holds.
\end{thm}
\begin{enumerate}[label=\thethm(\alph{enumi})]
\item \label{thm:1a}If $\tau_{n}$ is a sequence of non-negative random
variables such that $\tau_{n}\overset{p}{\rightarrow}0$, then condition
(\ref{eq:onesided-hsdrf})~(a) holds.
\item \label{thm:1b} If $\tau_{n}$ is a sequence of positive random variables
such that $\tau_{n}\overset{p}{\rightarrow}0$ and $n^{-1/2}/\tau_{n}\overset{p}{\rightarrow}0$,
then condition (\ref{eq:onesided-hsdrf}) (b) also holds so that $d_{H}\left(\widehat{\Theta}_{n}\left(\tau_{n}\right),\Theta_{0}\right)\overset{p}{\rightarrow}0$.
\end{enumerate}
Regarding the assumptions on the slackness sequence $\tau_{n}$, note
that exact solutions ($\tau_{n}=0$) are allowed in part (a) of this
theorem, but the stronger requirement in part (b) requires the slackness
sequence to converge to zero at a rate slower than $1/\sqrt{n}$.
Our Algorithm \ref{algo:gpd} used for finding an approximate solution
to the empirical minimax problem alternates updates of $\gamma$ and
$\delta$ over multiple epochs (i.e. full passes to the data). The
total number of epochs is selected by monitoring the trajectory of
some suitably chosen metric (e.g. the sliced Wasserstein distance
presented in Appendix \ref{subsec:swd}), and training is stopped
once further epochs yield no appreciable change in the stopping criterion.
The tolerance associated with declaring convergence to a solution
should intuitively decrease with the sample size, consistent with
a choice of slackness sequence like $\tau_{n}=n^{-0.49}$.

As for the conclusions of the Theorem, the result in part (b) shows
that the set of approximate solutions to the empirical minimax problem
is a Hausdorff consistent estimator of $\Theta_{0}$, regardless of
the number of solutions. Note that this Hausdorff consistency result
is for the entire set $\widehat{\Theta}_{n}(\tau_{n})$ --- a single
element, say $\hat{\theta}_{n}$, of $\widehat{\Theta}_{n}(\tau_{n})$
would only satisfy one-sided consistency $d(\hat{\theta}_{n},\Theta_{0})\overset{p}{\rightarrow}0$
but obviously not $\sup_{\theta\in\Theta_{0}}d(\theta,\hat{\theta}_{n})\overset{p}{\rightarrow}0$
when $\Theta_{0}$ is truly set-valued. As was mentioned above, we
are not in a position to establish algorithmic convergence of our
Algorithm \ref{algo:gpd}. Nevertheless, assuming solutions found
by this (or any other) algorithm, say $\hat{\theta}_{n}$, are within
the $\tau_{n}$ tolerance from exact solutions in the sense that $\hat{\theta}_{n}\in\widehat{\Theta}_{n}(\tau_{n})$,
these consistency results would also apply to that algorithm. Therefore,
once algorithmic convergence is proven, our result gives a general
method of establishing consistency. 

Theorem \ref{thm:consistent} shows that estimation of the generative
representation parameters is consistent under appropriate assumptions.
Further properties of these estimators could also be entertained.
For instance, one could ask how to form confidence sets for these
solutions, or what is the asymptotic distribution of these solutions.
Such questions are studied in an IID minimax setting in \citet[Section 4]{Meitz:2024}
and \citet{Meitz:Shapiro:2025}, respectively. We leave the extension
of these results to the present setting for future research.

\subsection{Discussion: Weak convergence of the generative representation}

Combining the generative representation of Lemma \ref{lem:representation}
with the parameter estimators, say $\hat{\theta}_{n}=(\hat{\gamma}_{n},\hat{\delta}_{n})$,
now in hand, raises some obvious interesting questions: Does $G_{\hat{\gamma}_{n}}(Z_{t},X_{t})$
converge in distribution to $(Y_{t},X_{t})$, and does the conditional
distribution of $G_{\hat{\gamma}_{n}}(Z_{t},x)$ given $X_{t}=x$
converge to the conditional distribution of $Y_{t}$ given $X_{t}=x$?
\citet[Section 4]{Zhou:etal:2023} study such questions in their setup
and with IID data, establishing weak convergence. However, extending
their results to our framework and proving weak convergence for our
GPD method appears challenging for a number of technical reasons.
The primary obstacle is that key assumptions employed by \citet[p. 1841]{Zhou:etal:2023}
do not hold in our setup. In particular, for MLP generators with ReLU
activation, the push-forward distribution is generally not absolutely
continuous with respect to the Lebesgue measure, and therefore a density
is not guaranteed to exist (cf. the discussion around our Lemma \ref{lem:criterion}).
Extending their results to allow for more general distributions appears
highly non-trivial. Moreover, their approach requires the density
ratio between fake and real data distributions to be continuous, an
assumption that appears restrictive in our context. Yet another technical
hurdle is that their results rely on boundedness assumptions for both
the neural network architectures and the true generator $G^{*}$,
which implicitly restricts the support of the data distribution to
a compact set. Such constraints are incompatible with our setup, which
explicitly allows for unbounded covariate and outcome spaces under
Assumption \ref{asm:moments}, alongside temporal dependence ($\beta$-mixing).
Forcing these boundedness constraints would directly conflict with
the model's ability to process unbounded time series in our empirical
applications. Classic results by \citet{hornik1989multilayer} established
that networks are universal approximators, and numerous papers have
obtained explicit approximation error bounds. For instance, \citet{Barron:1993}
found rates for shallow networks --- later improved by \citet{Chen:White:1999}
for general sigmoid functions --- while \citet{Yarotsky2017} did
so for deep ReLU networks, and \citet{Zhou:etal:2023} rely on recent
results by \citet{Shen:Yang:Zhang:2020}. Crucially, all of the above
restrict the domain of the target functions to a compact set, such
as the unit hypercube. Without a compact domain, the uniform approximation
error between the true conditional distribution map and the neural
network can grow without bound in the tails. Finding approximation
error rates under both unbounded support and time series dependence
is still an open problem. Therefore, while we prove Hausdorff consistency
in Theorem \ref{thm:consistent}, we leave the derivation of weak
convergence for future research.

\section{Empirical applications\label{sec:empirical}}

We briefly outline the three applications that demonstrate the practical
benefits of our GPD approach. First, in a dynamic portfolio allocation
exercise (S\&P 500 returns), we show how directly sampling from the
predictive distribution effortlessly handles nonlinear utility functions
and data transformations without requiring closed-form analytical
solutions. Second, when forecasting S\&P 500 realized variance, we
demonstrate the GPD framework's ability to capture severe departures
from Gaussianity in levels, while maintaining performance comparable
to standard linear-Gaussian models when applied to logarithmically
transformed data. Finally, in modeling realized covariance matrices,
we highlight the framework's scalability to multivariate settings,
easily circumventing the analytical intractability of multivariate
matrix transformations.

\subsection{S\&P 500 returns}

We first illustrate the use of GPD in a dynamic portfolio allocation
exercise. This exercise may be viewed as a frequentist counterpart
to \citet{Kandel:Stambaugh:1996}. We consider monthly log-returns
on the S\&P 500 index from January 1927 to October 2025. Let $P_{t}$
denote the adjusted closing price at end of month $t$, and define
log-returns by $r_{t}=100(\ln P_{t}-\ln P_{t-1}).$ We also define
simple returns as $R_{t}=100(P_{t}/P_{t-1}-1)=100(\exp(r_{t}/100)-1).$
The distinction between $r_{t}$ and $R_{t}$ is economically relevant
(indeed, if $r_{t}$ were distributed standard normal, a mean-variance
investor (mistakenly) focusing on log-returns would conclude the optimal
allocation weight is 0, while the optimal mean-variance allocation
based on simple returns is close to one half). The sample consists
of 1,174 monthly observations. Descriptive statistics, reported in
Table \ref{tab:sp500return}(a), indicate pronounced non-Gaussian
features, including skewness and excess kurtosis, consistent with
well-known properties of equity returns.

\begin{figure}
\centering

\begin{minipage}[t]{0.55\columnwidth}%
\includegraphics[width=1\columnwidth]{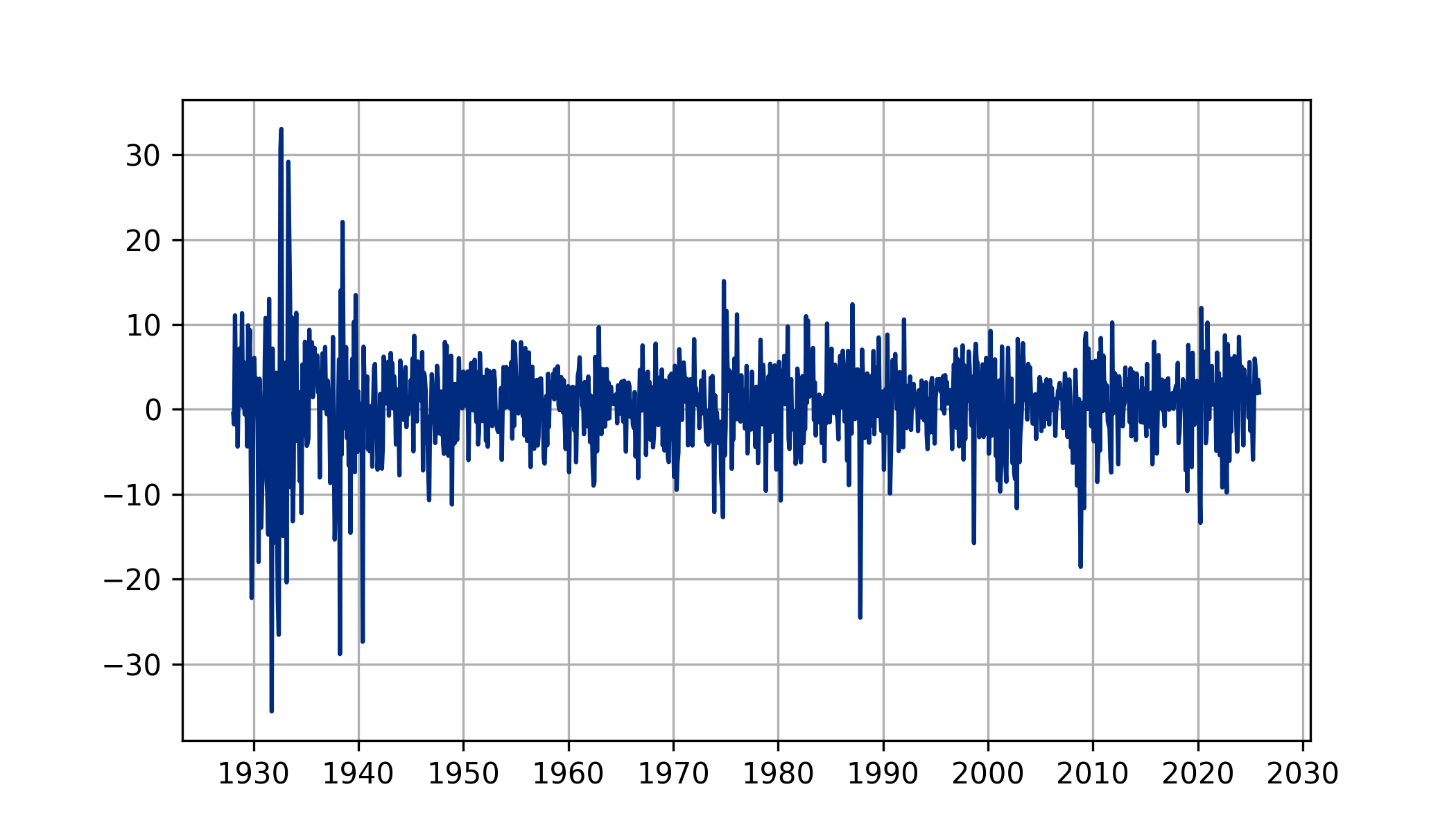}%
\end{minipage}%
\begin{minipage}[t]{0.45\columnwidth}%
\includegraphics[width=1\columnwidth]{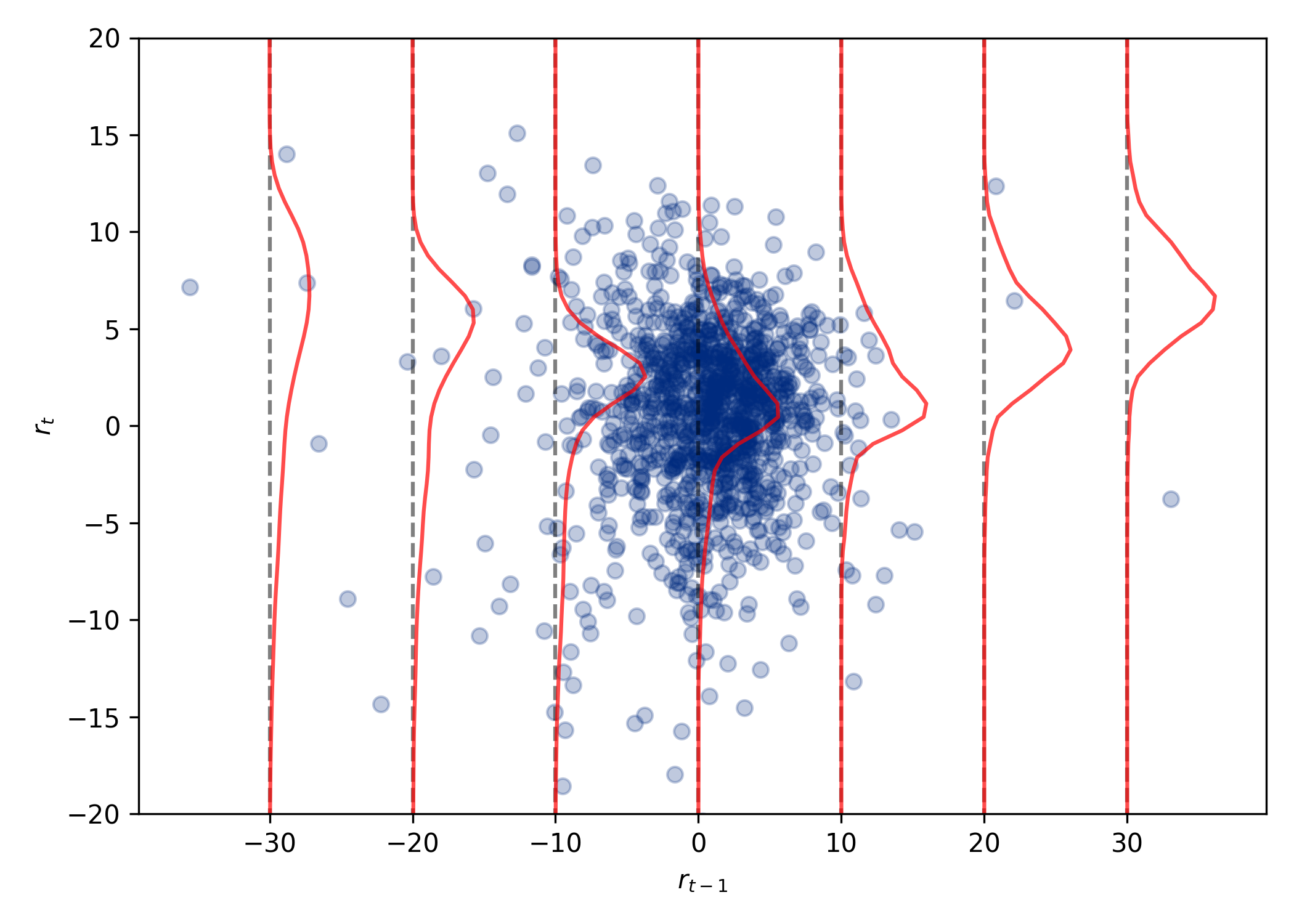}%
\end{minipage}

\begin{minipage}[t]{0.55\columnwidth}%
\includegraphics[width=1\columnwidth]{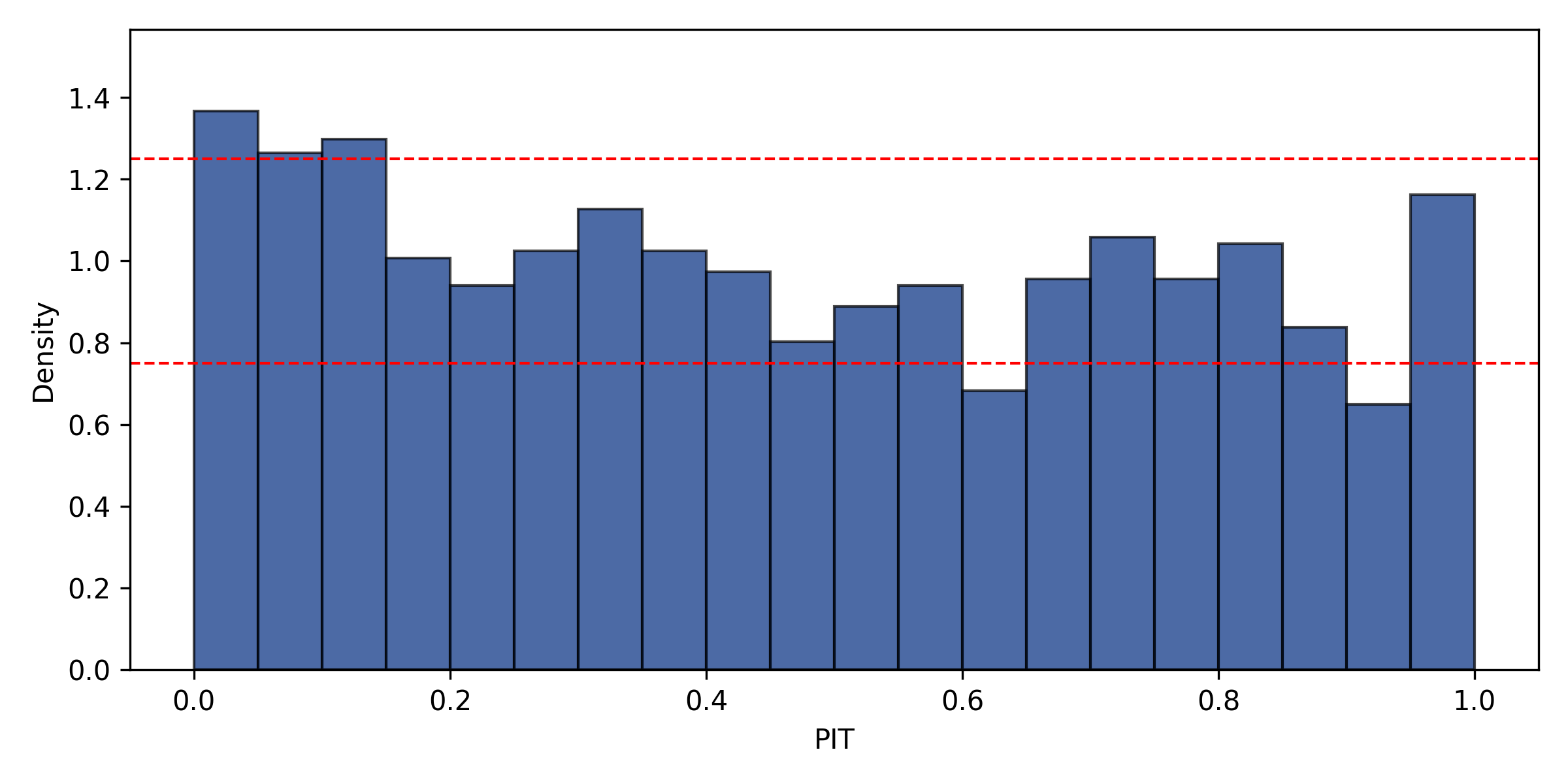}%
\end{minipage}%
\begin{minipage}[t]{0.45\columnwidth}%
\includegraphics[width=1\columnwidth]{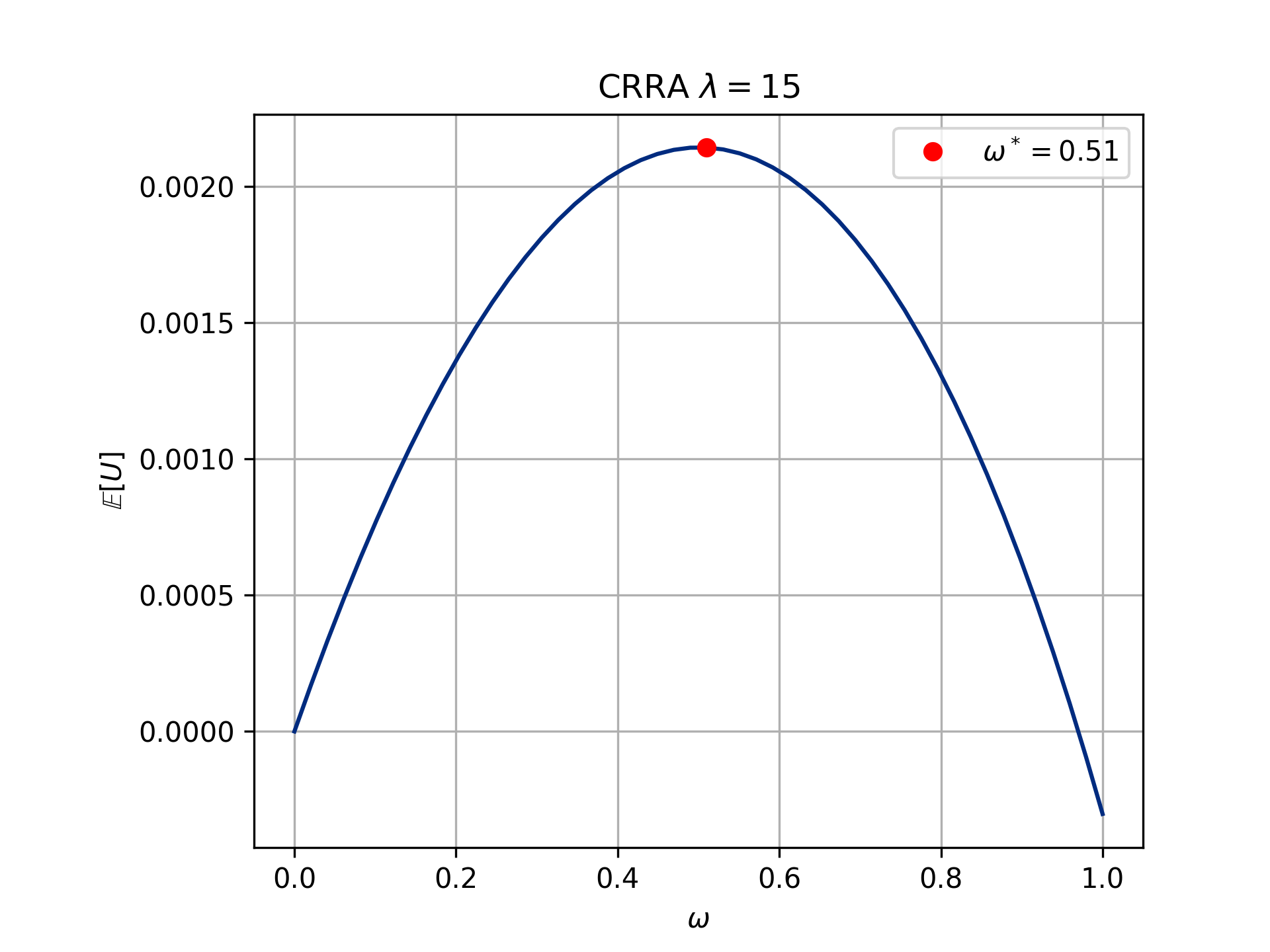}%
\end{minipage}

\caption{Top left: S\&P 500 log-returns, $r_{t}$. Top right: S\&P 500 log-returns
($r_{t}$) vs first lag (note: y-axis cropped to $[-20,20]$ to enhance
visibility). Kernel density summaries of the predictive distribution
based on 10,000 draws from GPD are depicted in red at selected values
of $r_{t-1}$, and with the remaining lags set to the sample mean
of $r_{t}$. Bottom left: In-sample PIT for GPD-based predictive distribution
and the remaining lags set to the sample mean. Bottom right: $\mathbb{E}\left[U\left(1+\omega R_{t}/100\right)\mid X_{t}\right]$,
with $r_{t}=100\ln\left(1+R_{t}/100\right)$ drawn from GPD-based
predictive distribution conditional on all lags set to the sample
mean.}

\label{fig:sp500return}
\end{figure}

\begin{table}
\centering

\subfloat[]{%
\begin{tabular}{lrr@{\extracolsep{0pt}.}lr@{\extracolsep{0pt}.}lr@{\extracolsep{0pt}.}lr@{\extracolsep{0pt}.}lr@{\extracolsep{0pt}.}lr@{\extracolsep{0pt}.}lr@{\extracolsep{0pt}.}l}
\noalign{\vskip\doublerulesep}
 & $n$ & \multicolumn{2}{c}{Mean} & \multicolumn{2}{c}{Std Dev} & \multicolumn{2}{c}{Skewness} & \multicolumn{2}{c}{Kurtosis} & \multicolumn{2}{c}{Minimum} & \multicolumn{2}{c}{Median} & \multicolumn{2}{c}{Maximum}\tabularnewline[\doublerulesep]
\hline 
\noalign{\vskip\doublerulesep}
\noalign{\vskip\doublerulesep}
$r_{t}$ & 1174 & 0&507 & 5&357 & -0&622 & 10&181 & -35&585 & 0&940 & 33&029\tabularnewline[\doublerulesep]
\noalign{\vskip\doublerulesep}
\noalign{\vskip\doublerulesep}
$R_{t}$ & 1174 & 0&652 & 5&337 & 0&105 & 10&811 & -29&942 & 0&945 & 39&138\tabularnewline[\doublerulesep]
\hline 
\noalign{\vskip\doublerulesep}
\end{tabular}}\ \subfloat[]{%
\begin{tabular}{lr@{\extracolsep{0pt}.}lr@{\extracolsep{0pt}.}lr@{\extracolsep{0pt}.}lr@{\extracolsep{0pt}.}lr@{\extracolsep{0pt}.}lr@{\extracolsep{0pt}.}lr@{\extracolsep{0pt}.}lr@{\extracolsep{0pt}.}l}
\noalign{\vskip\doublerulesep}
 & \multicolumn{2}{c}{$\lambda=5$} & \multicolumn{2}{c}{$\lambda=15$} & \multicolumn{2}{c}{$\lambda=50$} & \multicolumn{2}{c}{} & \multicolumn{2}{c}{Mean} & \multicolumn{2}{c}{Std Dev} & \multicolumn{2}{c}{Skewness} & \multicolumn{2}{c}{Kurtosis}\tabularnewline[\doublerulesep]
\hline 
\noalign{\vskip\doublerulesep}
\noalign{\vskip\doublerulesep}
$r_{t-1}=-10$ & 0&531 & 0&184 & 0&061 & \multicolumn{2}{c}{} & 0&468 & 4&469 & -1&720 & 5&896\tabularnewline[\doublerulesep]
\noalign{\vskip\doublerulesep}
\noalign{\vskip\doublerulesep}
$r_{t-1}=-5$ & 0&000 & 0&000 & 0&000 & \multicolumn{2}{c}{} & -0&235 & 3&663 & -1&546 & 5&895\tabularnewline[\doublerulesep]
\noalign{\vskip\doublerulesep}
\noalign{\vskip\doublerulesep}
$r_{t-1}=0$ & 1&000 & 0&592 & 0&184 & \multicolumn{2}{c}{} & 1&033 & 3&357 & -0&664 & 4&835\tabularnewline[\doublerulesep]
\noalign{\vskip\doublerulesep}
\noalign{\vskip\doublerulesep}
$r_{t-1}=5$ & 1&000 & 0&612 & 0&184 & \multicolumn{2}{c}{} & 0&713 & 2&699 & -0&777 & 5&403\tabularnewline[\doublerulesep]
\noalign{\vskip\doublerulesep}
\noalign{\vskip\doublerulesep}
$r_{t-1}=10$ & 1&000 & 1&000 & 0&469 & \multicolumn{2}{c}{} & 2&030 & 2&745 & -0&161 & 4&265\tabularnewline[\doublerulesep]
\hline 
\noalign{\vskip\doublerulesep}
\end{tabular}}

\caption{(a) Descriptive statistics for S\&P 500 monthly returns. (b) Optimal
portfolio allocation $\omega_{t}^{*}$ based on GPD, and GPD-based
conditional moments for different values of lagged returns. Remaining
lags set to the sample mean.}

\label{tab:sp500return}
\end{table}

To use our GPD method, we consider $Y_{t}=r_{t}$, choose the conditioning
variables as $X_{t}=(r_{t-1},r_{t-2},r_{t-3})$, and set $Z_{t}\overset{\text{IID}}{\sim}\mathcal{N}\left(0,1\right)$.
The generator and discriminator are both ReLU-MLPs with two hidden
layers and 50 hidden units per layer. No other architecture choices
are required from the user. Estimation is then performed using Algorithm
\ref{algo:gpd}, with the following standard hyperparameter choices
reported here for completeness: the learning rate in both Adam optimizers
is set to $2\cdot10^{-4}$, the batch size is 256, and all other hyperparameters
are set to their default values from the PyTorch implementation. We
implemented the early stopping rule, detailed in Appendix \ref{subsec:swd},
whose patience parameter is set to 750.

The estimated generator provides a predictive distribution for log-returns
$r_{t}$ given past returns contained in $X_{t}$. Because the generator
delivers an explicit mapping from Gaussian innovations to future returns
conditional on $X_{t}$, it is straightforward to simulate from the
predictive distribution at any conditioning value. Kernel density
summaries of the predictive distribution reveal nonlinearities and
conditional heteroskedasticity, see Figure \ref{fig:sp500return}
(top right). In particular, the shape of the predictive distribution
varies with lagged returns, exhibiting skewness and changes in dispersion.
As shown in Figure \ref{fig:sp500return} (bottom left), goodness
of fit diagnostics based on the Probability Integral Transform (PIT)
indicate an adequate fit \citep{Diebold:Gunther:Tay:1998}. Specifically,
if the conditional distribution is correctly specified, the PIT series
should be uniformly distributed over the unit interval. Importantly,
the predictive distribution for simple returns $R_{t}$ is obtained
directly by applying the nonlinear transformation $R_{t}=100(\exp(r_{t}/100)-1)$
to simulated draws from GPD. No additional approximation or analytical
argument is required.

We now consider a risk-averse investor with constant relative risk
aversion utility (CRRA). The utility as a function of wealth, $W$,
is given by
\[
U\left(W\right)=\frac{W^{1-\lambda}-1}{1-\lambda},\quad\lambda>1.
\]
The investor allocates a fraction $\omega_{t}\in\left[0,1\right]$
of wealth to the risky asset. Specifically, they choose
\[
\omega_{t}^{*}=\underset{\omega\in\left[0,1\right]}{\arg\max}\:\mathbb{E}\left[U\left(1+\omega\frac{R_{t}}{100}\right)\mid X_{t}\right].
\]
In general, this optimization problem has no closed-form solution.
However, given the GPD-based predictive distribution, the objective
function can be evaluated by Monte Carlo integration using simulated
draws from the generator. This yields an approximation to expected
utility as a function of $\omega$, from which the optimal allocation
can be computed numerically (see Figure \ref{fig:sp500return}, bottom
right).

The GPD-based predictive distribution allows us to obtain optimal
portfolio weights across different conditioning scenarios, see Table
\ref{tab:sp500return}(b) for some examples. For all levels of risk
aversion considered, the optimal allocation responds nonlinearly to
lagged returns. For example, when recent returns are negative ($r_{t-1}=-5)$,
the optimal allocation collapses to zero. Conversely, after sufficiently
positive lagged returns, the optimal weight may increase sharply,
sometimes reaching the upper constraint. The sensitivity to the coefficient
of relative risk aversion is also substantial. As expected, optimal
weights shrink to zero as the coefficient of risk aversion ($\lambda$)
increases. These allocations arise partly from the shape of the predictive
distribution and would be missed by approaches that focus only on
conditional means and variances. Again, we emphasize that the optimal
allocation should be based on simple rather than log-returns, which
involves the predictive distribution of $R_{t}$, a nonlinear transformation
of $r_{t}$. Note that while all GPD-based conditional moments take
a role in determining the exact allocations, the first two are already
quite helpful to qualitatively explain all optimal allocation decisions,
as it is evidenced from Table \ref{tab:sp500return}.

We emphasize that expected utility depends solely on the first two
moments of returns if either investors have quadratic utility or if
simple returns are Gaussian and preferences exhibit constant absolute
risk aversion (CARA). Outside these cases, higher-order features of
the predictive distribution such as skewness or kurtosis directly
affect optimal decisions. This is particularly relevant in this application,
where returns are clearly non-Gaussian and preferences are of the
CRRA type. Moreover, expected utility is defined over simple returns
rather than log-returns, introducing an additional nonlinear transformation.
In such settings, portfolio choice depends on the full predictive
distribution of returns, not just its mean and variance. This application
highlights a key advantage of the proposed framework: once the predictive
distribution is represented in terms of an efficient sampler, decision
problems relying on the predictive distribution can be solved straightforwardly.
Without direct access to the predictive distribution via a generator,
an investor would need to rely on cumbersome and computationally intensive
techniques. For instance, evaluating expected utility over non-Gaussian
returns typically requires high-dimensional numerical integration,
Taylor series approximations of the utility function \citep{Guidolin:Timmermann:2008},
or Markov Chain Monte Carlo (MCMC) simulations if a Bayesian parametric
model is assumed (e.g., \citealp{Barberis:2000}).

\subsection{S\&P 500 realized variance}

We next turn to forecasting realized variance of the S\&P 500, computed
from 5-minute intraday returns following \citet{Bollerslev:Patton:Quaedvlieg:2016},
for the period ranging from 21-Apr-1997 to 30-Aug-2013. The unconditional
distribution exhibits a heavy right tail and right skewness (see Table
\ref{tab:sp500rv_stats} and Figure \ref{fig:RV}). A logarithmic
transformation has the effect of eliminating the heavy tail and reducing
skewness. The transformation also offers the advantage that the support
of the distribution is unrestricted.

The empirical exercise is conducted in a static window framework.
All competing models are estimated on the first 2,000 observations
and evaluated out of sample on the remaining 2,096 observations. Forecast
performance is assessed using three standard loss functions: (Gaussian)
quasi-likelihood (QLIKE), mean squared error (MSE), and mean absolute
error (MAE). Both QLIKE and MSE are of particular interest in this
context, as they are robust to measurement error in realized variance
and penalize forecast errors in ways that are directly relevant for
volatility forecasting \citep{Patton:2011}. Among the two, QLIKE
is often preferred in empirical comparisons because it typically provides
greater power in forecast evaluation tests and is less dominated by
extreme realizations, while still remaining sensitive to systematic
differences in predictive accuracy.

\begin{figure}[p]
\centering%
\begin{minipage}[t]{0.9\columnwidth}%
\includegraphics[width=1\columnwidth]{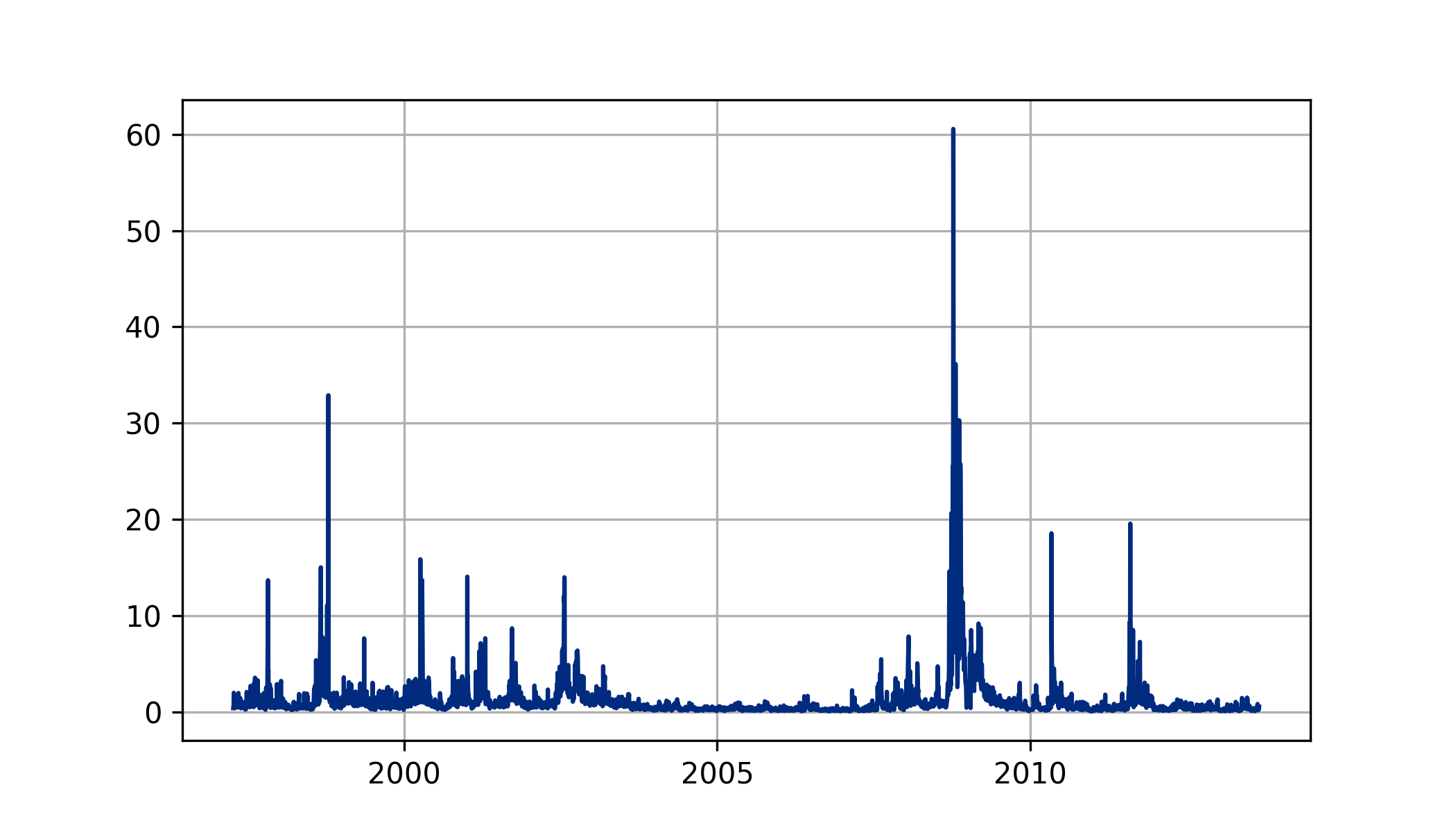}%
\end{minipage}

\caption{Five minute RV for S\&P 500.}
\label{fig:RV}
\end{figure}

\begin{table}[p]
\centering%
\begin{tabular}{lr@{\extracolsep{0pt}.}lr@{\extracolsep{0pt}.}lr@{\extracolsep{0pt}.}lr@{\extracolsep{0pt}.}lr@{\extracolsep{0pt}.}lr@{\extracolsep{0pt}.}lr@{\extracolsep{0pt}.}lr@{\extracolsep{0pt}.}l}
\noalign{\vskip\doublerulesep}
 & \multicolumn{2}{c}{$n$} & \multicolumn{2}{c}{Mean} & \multicolumn{2}{c}{Std Dev} & \multicolumn{2}{c}{Skewness} & \multicolumn{2}{c}{Kurtosis} & \multicolumn{2}{c}{Minimum} & \multicolumn{2}{c}{Median} & \multicolumn{2}{c}{Maximum}\tabularnewline[\doublerulesep]
\hline 
\noalign{\vskip\doublerulesep}
\noalign{\vskip\doublerulesep}
$RV_{t}$ & \multicolumn{2}{c}{4096} & 1&175 & 2&315 & 10&028 & 166&922 & 0&043 & 0&629 & 60&563\tabularnewline[\doublerulesep]
\noalign{\vskip\doublerulesep}
\noalign{\vskip\doublerulesep}
$\ln RV_{t}$ & \multicolumn{2}{c}{4096} & -0&417 & 0&974 & 0&517 & 3&550 & -3&140 & -0&463 & 4&104\tabularnewline[\doublerulesep]
\hline 
\noalign{\vskip\doublerulesep}
\end{tabular}

\caption{Descriptive statistics for S\&P 500 realized variance.}
\label{tab:sp500rv_stats}
\end{table}

\begin{table}[p]
\centering\subfloat[All models estimated on the first 2,000 observations of $RV_{t}$
and evaluated out of sample on the remaining 2,096. GPD(5) with $L_{G}=2$,
$H_{G}=32$, $L_{D}=2$, $H_{D}=150$, and $X_{t}=(RV_{t-1},\dots,RV_{t-5})$.]{\centering

\begin{tabular}{lccc}
\noalign{\vskip\doublerulesep}
 & QLIKE & MSE & MAE\tabularnewline[\doublerulesep]
\hline 
\noalign{\vskip\doublerulesep}
\noalign{\vskip\doublerulesep}
AR(5) & 0.211 & 3.618{*} & 0.564\tabularnewline[\doublerulesep]
\noalign{\vskip\doublerulesep}
\noalign{\vskip\doublerulesep}
HAR & 0.181 & \textbf{3.573{*}} & 0.534\tabularnewline[\doublerulesep]
\noalign{\vskip\doublerulesep}
\noalign{\vskip\doublerulesep}
MSAR(5) & 0.204 & 3.919{*} & 0.571\tabularnewline[\doublerulesep]
\noalign{\vskip\doublerulesep}
\noalign{\vskip\doublerulesep}
GPD(5) & \textbf{0.147{*}} & 3.712{*} & \textbf{0.521{*}}\tabularnewline[\doublerulesep]
\noalign{\vskip\doublerulesep}
\noalign{\vskip\doublerulesep}
RW & 0.19 & 5.146{*} & 0.575{*}\tabularnewline[\doublerulesep]
\noalign{\vskip\doublerulesep}
\noalign{\vskip\doublerulesep}
Mean & 0.782 & 8.201{*} & 1.187\tabularnewline[\doublerulesep]
\hline 
\noalign{\vskip\doublerulesep}
\end{tabular}

}\ \ \ \subfloat[All models estimated on the first 2,000 observations of $\ln RV_{t}$
and evaluated (on $RV_{t}$) out of sample on the remaining 2,096.
GPD$_{\log}$(5) with $L_{G}=2$, $H_{G}=32$, $L_{D}=2$, $H_{D}=150$,
and $X_{t}=(\ln RV_{t-1},\dots,\ln RV_{t-5})$.]{\centering

\begin{tabular}{lccc}
\noalign{\vskip\doublerulesep}
 & QLIKE & MSE & MAE\tabularnewline[\doublerulesep]
\hline 
\noalign{\vskip\doublerulesep}
\noalign{\vskip\doublerulesep}
AR$_{\log}$(5) & 0.142{*} & 3.458{*} & 0.493{*}\tabularnewline[\doublerulesep]
\noalign{\vskip\doublerulesep}
\noalign{\vskip\doublerulesep}
ARMA$_{\log}$(1,1) & 0.144{*} & 3.516{*} & 0.496{*}\tabularnewline[\doublerulesep]
\noalign{\vskip\doublerulesep}
\noalign{\vskip\doublerulesep}
HAR$_{\log}$ & 0.144{*} & 3.515{*} & \textbf{0.489}{*}\tabularnewline[\doublerulesep]
\noalign{\vskip\doublerulesep}
\noalign{\vskip\doublerulesep}
MSAR$_{\log}$(5) & \textbf{0.140}{*} & 3.497{*} & 0.497{*}\tabularnewline[\doublerulesep]
\noalign{\vskip\doublerulesep}
\noalign{\vskip\doublerulesep}
GPD$_{\log}$(5) & 0.143{*} & \textbf{3.200{*}} & 0.501{*}\tabularnewline[\doublerulesep]
\noalign{\vskip\doublerulesep}
\noalign{\vskip\doublerulesep}
Mean$_{\log}$ & 0.782 & 8.200{*} & 1.157\tabularnewline[\doublerulesep]
\hline 
\noalign{\vskip\doublerulesep}
\end{tabular}

}

\caption{Forecast comparison S\&P 500, 5-min RV. The superscript $^{*}$ denotes
inclusion in the model confidence set at the 90\% level.}

\label{tab:sp500rv}
\end{table}

Inference is conducted using a Model Confidence Set (MCS) \citep{Hansen:Lunde:Nason:2011}
procedure that is implemented jointly across all competing models,
including specifications estimated directly on realized variance and
those estimated on log realized variance. The first comparison, reported
in Table \ref{tab:sp500rv}(a), considers models estimated directly
on realized variance levels. For the GPD, we set $Y_{t}=RV_{t}$,
$X_{t}=(RV_{t-1},\dots,RV_{t-5})$, $Z_{t}\sim\mathcal{N}(0,1)$,
and network architecture with $L_{G}=L_{D}=2$, $H_{G}=32$, and $H_{D}=150$.
Among traditional benchmarks such as AR(5), Heterogeneous Autoregressive
model (HAR), Markov-Switching Autoregressive Model of order 5 (MSAR(5)),
random walk (RW), and the unconditional mean, the GPD(5) specification
delivers the lowest QLIKE loss and is included in the 90\% model confidence
set. The gains relative to traditional models are particularly evident
under QLIKE and MAE.

Table \ref{tab:sp500rv}(b) reports results for models estimated on
log RV and evaluated on the realized variance scale. For the GPD,
$Y_{t}=\ln RV_{t}$ and $X_{t}=(\ln RV_{t-1},\dots,\ln RV_{t-5})$
and other choices are as before, with the only exception being the
patience parameter in the early stopping rule, which is set to 40
for the logarithmic scale and to 500 for the realized variance scale.
As expected, the log transformation creates a more favorable environment
for conventional linear and nonlinear time series models. In this
setting, AR, ARMA, HAR and MSAR specifications all perform well and
belong to the 90\% model confidence set. The GPD specification remains
competitive, particularly under MSE, although its relative advantage
under QLIKE is attenuated once the data are transformed.

Taken together, these results highlight two key insights. First, when
forecasting realized variance directly in levels, where nonlinearity
and tail risks are most pronounced, the GPD delivers clear gains in
QLIKE and MAE. Second, when transformations are applied that bring
the data distribution closer to Gaussianity, standard models perform
well, and GPD still remains competitive while retaining the ability
to generate full predictive distributions and to handle nonlinear
transformations without analytical approximations. This flexibility
is valuable in forecasting volatility and can be further exploited
where interest extends beyond point forecasts to the entire distribution
of RV.

\subsection{Realized covariance}

Finally, we demonstrate the scalability of GPD to multivariate time
series by modeling and forecasting realized covariance matrices. The
data consist of open-to-close daily realized covariance matrices for
two large US financial institutions, Bank of America (BAC) and JPMorgan
Chase (JPM), computed using 5-minute intraday returns over the period
from 1-Feb-2001 to 31-Dec-2009. The construction of the realized covariance
matrices follows the methodology of \citet{Noureldin:Shephard:Sheppard:2012}.
Let $RV_{t}$ denote the 2$\times$2 realized covariance matrix on
day $t$.

To ensure positive definiteness, we consider a transformation based
on the matrix logarithm. For a symmetric positive definite matrix
$RV_{t}$, the matrix logarithm is defined via its eigendecomposition
\[
RV_{t}=Q_{t}\left[\begin{array}{cc}
\lambda_{1,t} & 0\\
0 & \lambda_{2,t}
\end{array}\right]Q_{t}',\qquad\ln RV_{t}:=Q_{t}\left[\begin{array}{cc}
\ln\lambda_{1,t} & 0\\
0 & \ln\lambda_{2,t}
\end{array}\right]Q_{t}'.
\]
The transformed series is again vectorized, $Y_{t}=\text{vech}\left(\ln RV_{t}\right)$,
where $\text{vech}\left(\cdot\right)$ is the half-vectorization operator.
Working with the matrix logarithm has the advantage of producing series
with distributions that are closer to Gaussian, see Table \ref{tab:rcov_stats}.

\begin{table}
\centering%
\begin{tabular}{lr@{\extracolsep{0pt}.}lr@{\extracolsep{0pt}.}lr@{\extracolsep{0pt}.}lr@{\extracolsep{0pt}.}lr@{\extracolsep{0pt}.}lr@{\extracolsep{0pt}.}lr@{\extracolsep{0pt}.}lr@{\extracolsep{0pt}.}lr@{\extracolsep{0pt}.}lr@{\extracolsep{0pt}.}lr@{\extracolsep{0pt}.}l}
\noalign{\vskip\doublerulesep}
 & \multicolumn{6}{c}{$RV_{t}$} & \multicolumn{2}{c}{} & \multicolumn{6}{c}{$\ln RV_{t}$} & \multicolumn{2}{c}{} & \multicolumn{6}{c}{$\text{GPD}:\ln RV_{t}$}\tabularnewline[\doublerulesep]
\cline{2-7}\cline{10-15}\cline{18-23}
\noalign{\vskip\doublerulesep}
\noalign{\vskip\doublerulesep}
 & \multicolumn{2}{c}{BAC} & \multicolumn{2}{c}{Cov} & \multicolumn{2}{c}{JPM} & \multicolumn{2}{c}{} & \multicolumn{2}{c}{BAC} & \multicolumn{2}{c}{Cov} & \multicolumn{2}{c}{JPM} & \multicolumn{2}{c}{} & \multicolumn{2}{c}{BAC} & \multicolumn{2}{c}{Cov} & \multicolumn{2}{c}{JPM}\tabularnewline[\doublerulesep]
\cline{1-7}\cline{10-15}\cline{18-23}
\noalign{\vskip\doublerulesep}
\noalign{\vskip\doublerulesep}
$n$ & \multicolumn{2}{c}{2242} & \multicolumn{2}{c}{2242} & \multicolumn{2}{c}{2242} & \multicolumn{2}{c}{} & \multicolumn{2}{c}{2242} & \multicolumn{2}{c}{2242} & \multicolumn{2}{c}{2242} & \multicolumn{2}{c}{} & \multicolumn{2}{c}{2242} & \multicolumn{2}{c}{2242} & \multicolumn{2}{c}{2242}\tabularnewline[\doublerulesep]
\noalign{\vskip\doublerulesep}
mean & 5&468 & 3&048 & 5&055 & \multicolumn{2}{c}{} & 0&130 & 0&565 & 0&552 & \multicolumn{2}{c}{} & 0&115 & 0&574 & 0&559\tabularnewline[\doublerulesep]
\noalign{\vskip\doublerulesep}
std & 16&811 & 8&295 & 11&094 & \multicolumn{2}{c}{} & 1&353 & 0&269 & 1&192 & \multicolumn{2}{c}{} & 1&308 & 0&267 & 1&166\tabularnewline[\doublerulesep]
\noalign{\vskip\doublerulesep}
skew & 7&173 & 6&515 & 7&524 & \multicolumn{2}{c}{} & 1&015 & 0&592 & 0&461 & \multicolumn{2}{c}{} & 0&953 & 0&451 & 0&380\tabularnewline[\doublerulesep]
\noalign{\vskip\doublerulesep}
kurtosis & 72&505 & 59&722 & 84&610 & \multicolumn{2}{c}{} & 3&871 & 3&368 & 2&811 & \multicolumn{2}{c}{} & 3&737 & 3&429 & 2&521\tabularnewline[\doublerulesep]
\noalign{\vskip\doublerulesep}
min & 0&074 & -0&432 & 0&112 & \multicolumn{2}{c}{} & -2&664 & -0&134 & -2&228 & \multicolumn{2}{c}{} & -2&293 & -0&292 & -2&093\tabularnewline[\doublerulesep]
\noalign{\vskip\doublerulesep}
median & 1&046 & 0&643 & 1&867 & \multicolumn{2}{c}{} & -0&165 & 0&528 & 0&459 & \multicolumn{2}{c}{} & -0&131 & 0&554 & 0&490\tabularnewline[\doublerulesep]
\noalign{\vskip\doublerulesep}
max & 277&308 & 112&197 & 176&478 & \multicolumn{2}{c}{} & 5&480 & 1&714 & 5&058 & \multicolumn{2}{c}{} & 5&413 & 1&776 & 4&472\tabularnewline[\doublerulesep]
\hline 
\end{tabular}\caption{Descriptive statistics of realized covariance of BAC and JPM and its
matrix logarithm. The statistics for $\text{GPD}:\ln RV_{t}$ are
based on the unconditional distribution implied by the generator.
The latter is constructed by aggregating simulated draws from the
generator, $G_{\hat{\gamma}_{n}}(Z_{t},X_{t})$, evaluated over the
entire empirical sequence of historical predictors $X_{t}$ using
independent standard normal innovations $Z_{t}$ of dimension 3.}

\label{tab:rcov_stats}
\end{table}

Moreover, while forecasts of $RV_{t}$ are restricted to lie in the
cone of positive definite matrices, the components of the matrix logarithm
are unrestricted. It is possible to estimate GPD directly on the raw
realized covariances $RV_{t}$; however, the positive definiteness
of the outputs cannot be guaranteed without modifications. These modifications
may involve restricting the parameter space $\Gamma$ (with corresponding
changes to the training procedure) or designing a suitably constrained
generator.

We remark that, whereas the realized covariance literature has primarily
focused on forecasting the implied covariance matrix --- namely,
the conditional mean of $RV_{t}$ under conditional unbiasedness ---
forecasting $RV_{t}$ from a model specified for $\ln RV_{t}$ requires
the full predictive distribution, even under the assumption of Gaussianity.
Unlike in the univariate case, where under Gaussianity, realized variance
follows a log-normal distribution and a closed form expression for
the mean is known, in the multivariate case there is no closed form
expression for the mean of the matrix exponential of a multivariate
Gaussian.

Here, GPD is specified with a multivariate response and lagged predictors,
in analogy with a vector autoregressive structure of order 1. The
response variable is $Y_{t}=\text{vech}\left(\ln RV_{t}\right)$,
depending on the specification. The predictors consist of the first
lag of the response, i.e. $X_{t}=Y_{t-1}$, and latent innovations
$Z_{t}$ are drawn independently from a standard normal distribution
of dimension 3. The generator is implemented as a ReLU-MLP with two
hidden layers ($L_{G}=2$) of 16 hidden units ($H_{G}=16$), while
the discriminator is a ReLU-MLP with two hidden layers of 150 units
each ($L_{D}=2,H_{D}=150)$. Again, estimation is performed using
Algorithm \ref{algo:gpd}, with all other choices as above except
for the patience parameter in the early stopping rule, which in this
case is set to 500.

As seen in Figure \ref{fig:rcov} (top left), the unconditional multivariate
distribution of realized covariances exhibits pronounced right skewness,
heavy tails, and strong nonlinear dependence between variances and
covariances. In contrast, the unconditional marginal distributions
of the matrix logarithm of realized covariances are substantially
more symmetric, with markedly reduced skewness and kurtosis (see Figure
\ref{fig:rcov}, top right), although some non-Gaussian features are
still apparent. Moreover, there is strong (and somewhat nonlinear)
dependence between the components of both the raw realized covariance
and its matrix logarithm.

\begin{figure}
\centering

\begin{minipage}[t]{0.45\columnwidth}%
\includegraphics[width=1\columnwidth]{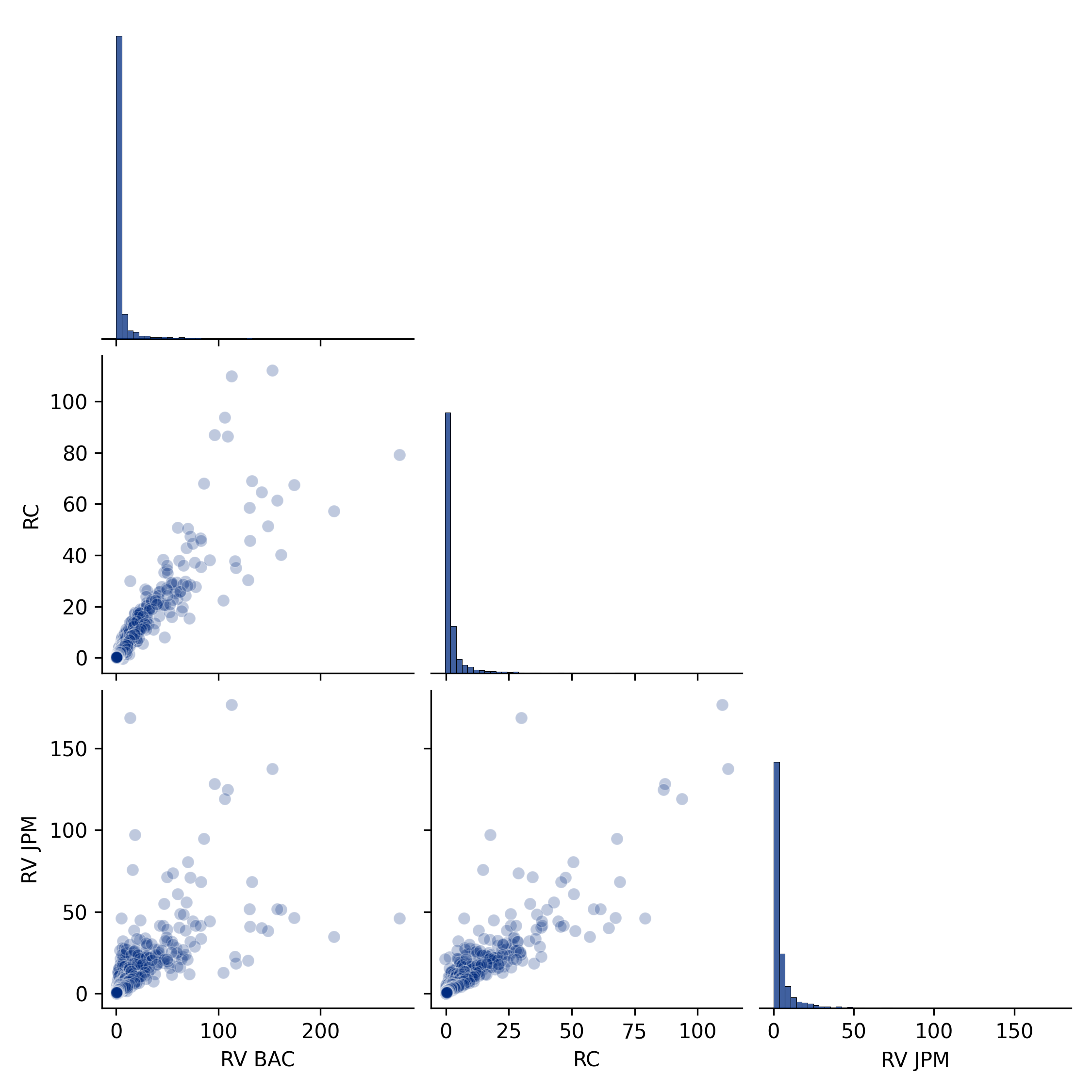}%
\end{minipage}%
\begin{minipage}[t]{0.45\columnwidth}%
\includegraphics[width=1\columnwidth]{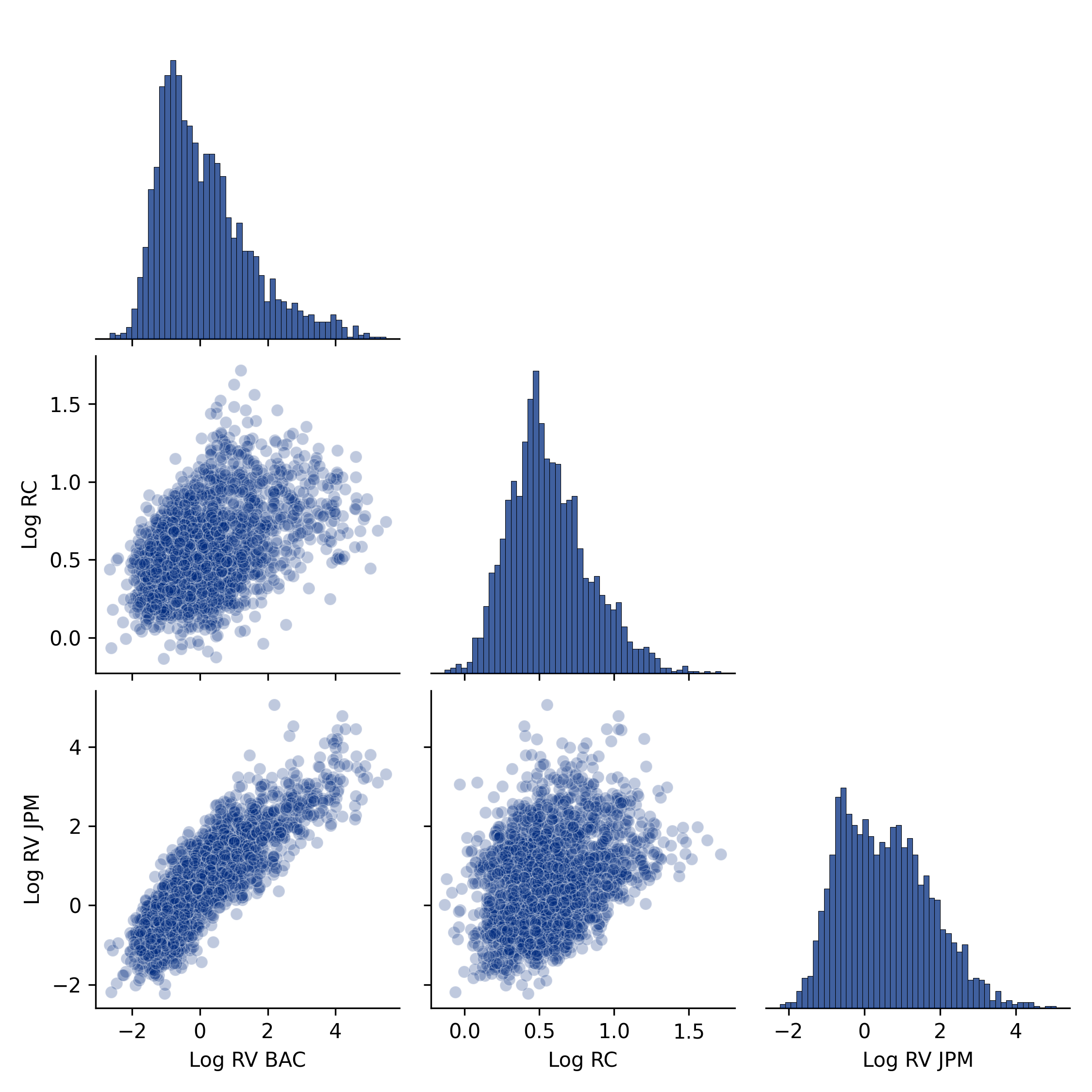}%
\end{minipage}

\begin{minipage}[t]{0.45\columnwidth}%
\includegraphics[width=1\columnwidth]{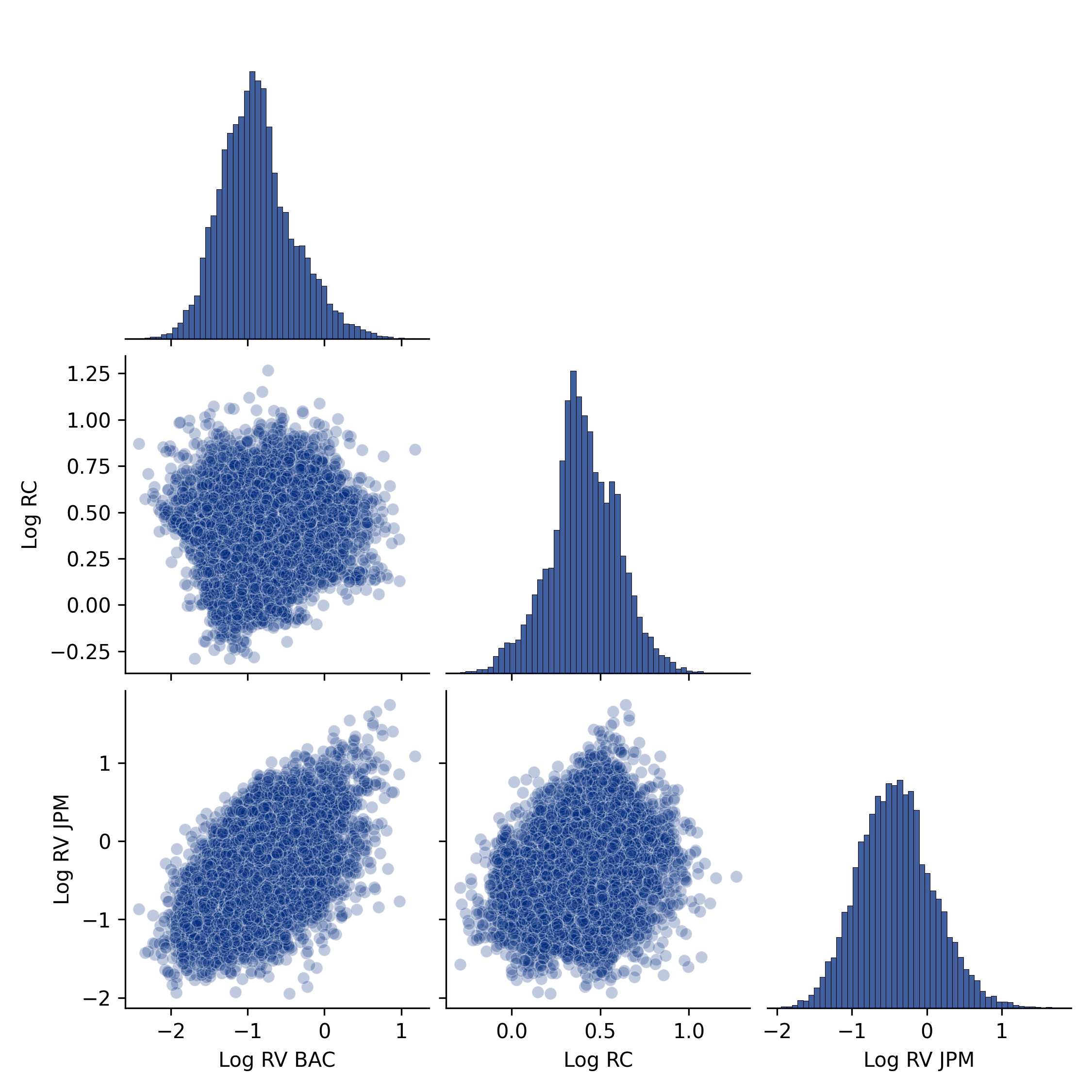}%
\end{minipage}%
\begin{minipage}[t]{0.45\columnwidth}%
\includegraphics[width=1\columnwidth]{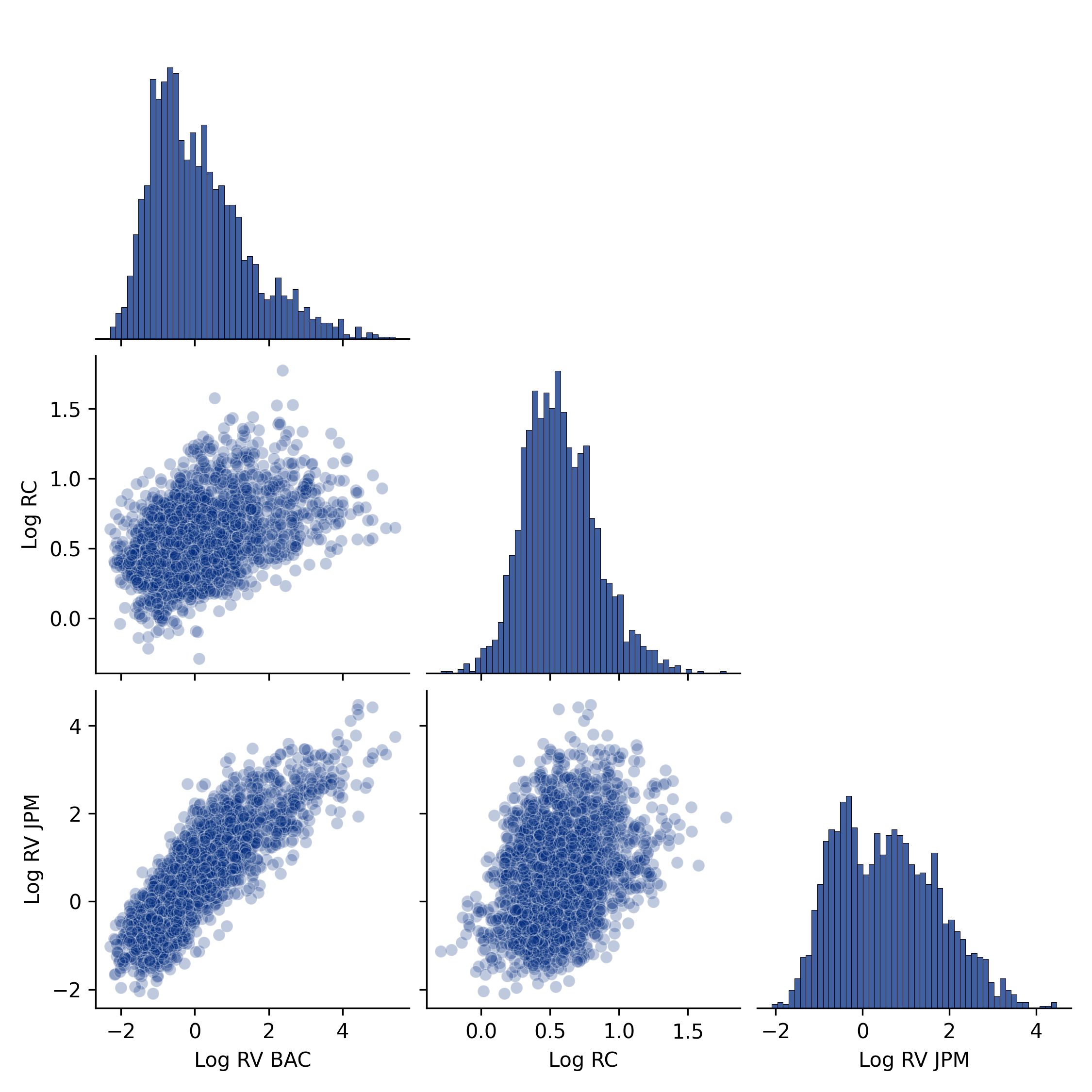}%
\end{minipage}

\caption{The scatterplot matrices display the three unique components of the
$2\times2$ (log) realized covariance matrix: the (log) realized variance
of BAC, the (log) realized variance of JPM, and their (log) realized
covariance. The diagonal panels plot the marginal distributions for
each component. The off-diagonal scatter plots illustrate the pairwise
joint distributions. Top left: Unconditional empirical distribution
of the daily realized covariance matrix for Bank of America (BAC)
and JPMorgan Chase (JPM). Top right: Unconditional empirical distribution
of the matrix logarithm of the realized covariance matrix. Bottom
left: GPD-based predictive distribution of the matrix logarithm of
the realized covariance matrix at the end of the sample. This represents
the model's predictive distribution for the next day ($n+1$), constructed
by simulating draws from the estimated GPD generator conditional on
the final observation in the dataset ($X=Y_{n}$). Bottom right: Unconditional
distribution implied by the estimated GPD model. This implied distribution
is constructed by aggregating simulated draws from the generator,
$G_{\hat{\gamma}_{n}}(Z_{t},X_{t})$, evaluated over the entire empirical
sequence of historical predictors $X_{t}$ using independent standard
normal innovations $Z_{t}$. }

\label{fig:rcov}
\end{figure}

For illustration, the GPD-based predictive distribution (Figure \ref{fig:rcov},
bottom left) is constructed conditional on lagged realized covariances
at the end of the sample. Such distribution is obtained by simulating
from the estimated generator $G_{\hat{\gamma}_{n}}\left(Z,X\right)$,
with $X=Y_{n}$. Marginal PITs computed from the GPD predictive distribution
indicate an adequate fit, suggesting that the model provides a coherent
description of the predictive, marginal distributions for each component
(see Figure \ref{fig:rcov_pit}).

\begin{figure}[tb]
\centering

\begin{tabular}{c}
\includegraphics[width=0.35\textheight]{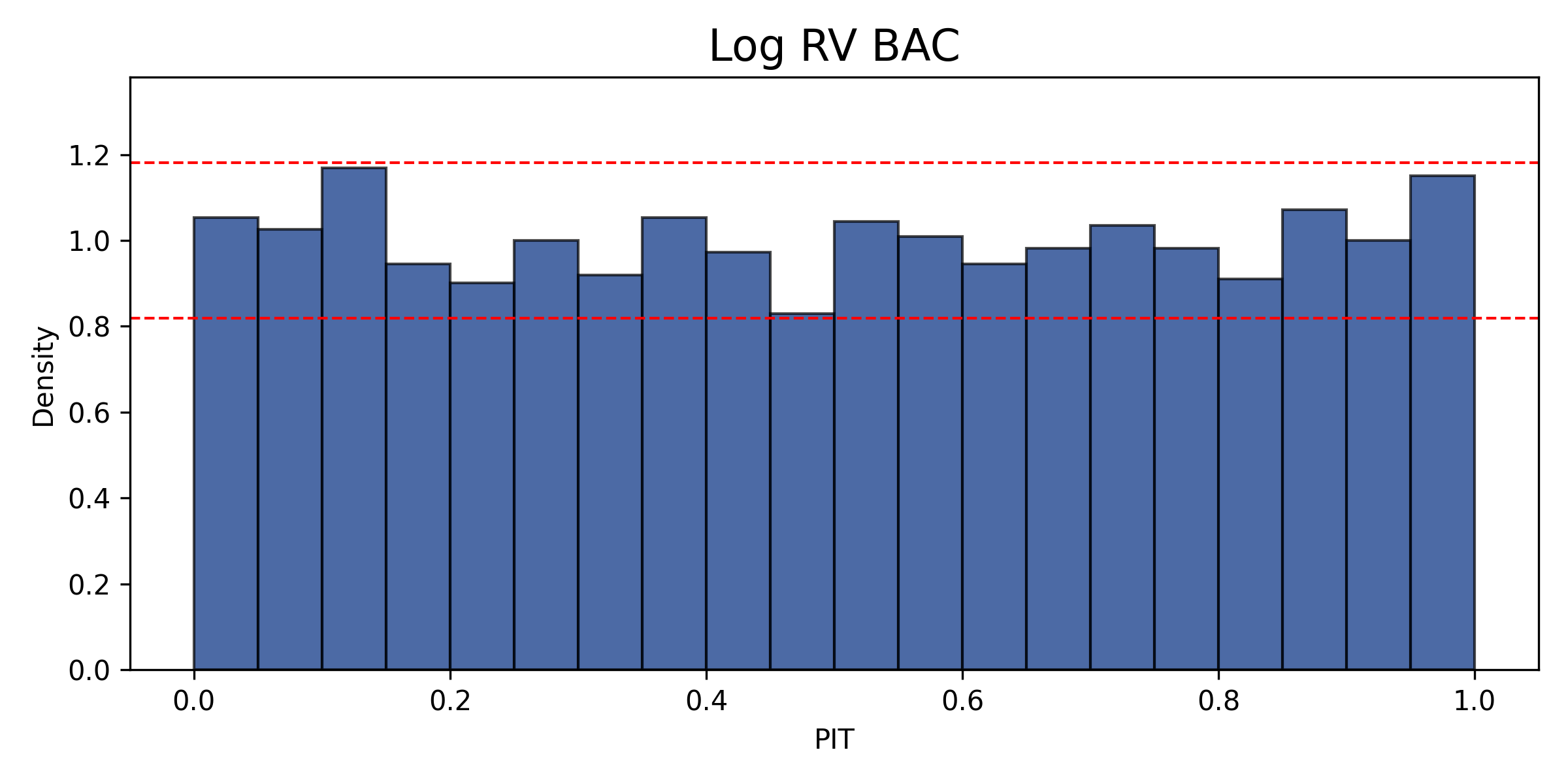}\tabularnewline
\includegraphics[width=0.35\textheight]{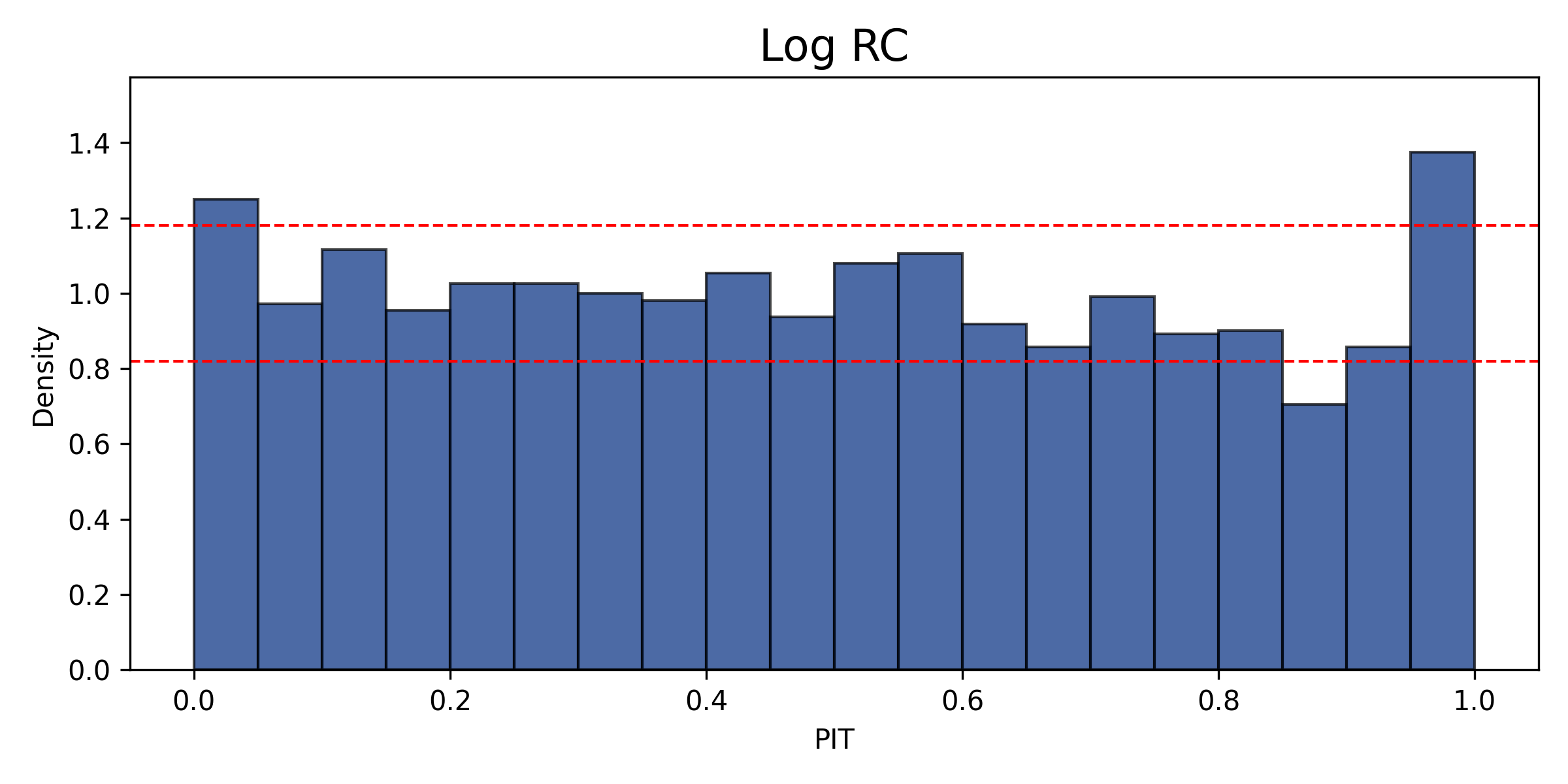}\tabularnewline
\includegraphics[width=0.35\textheight]{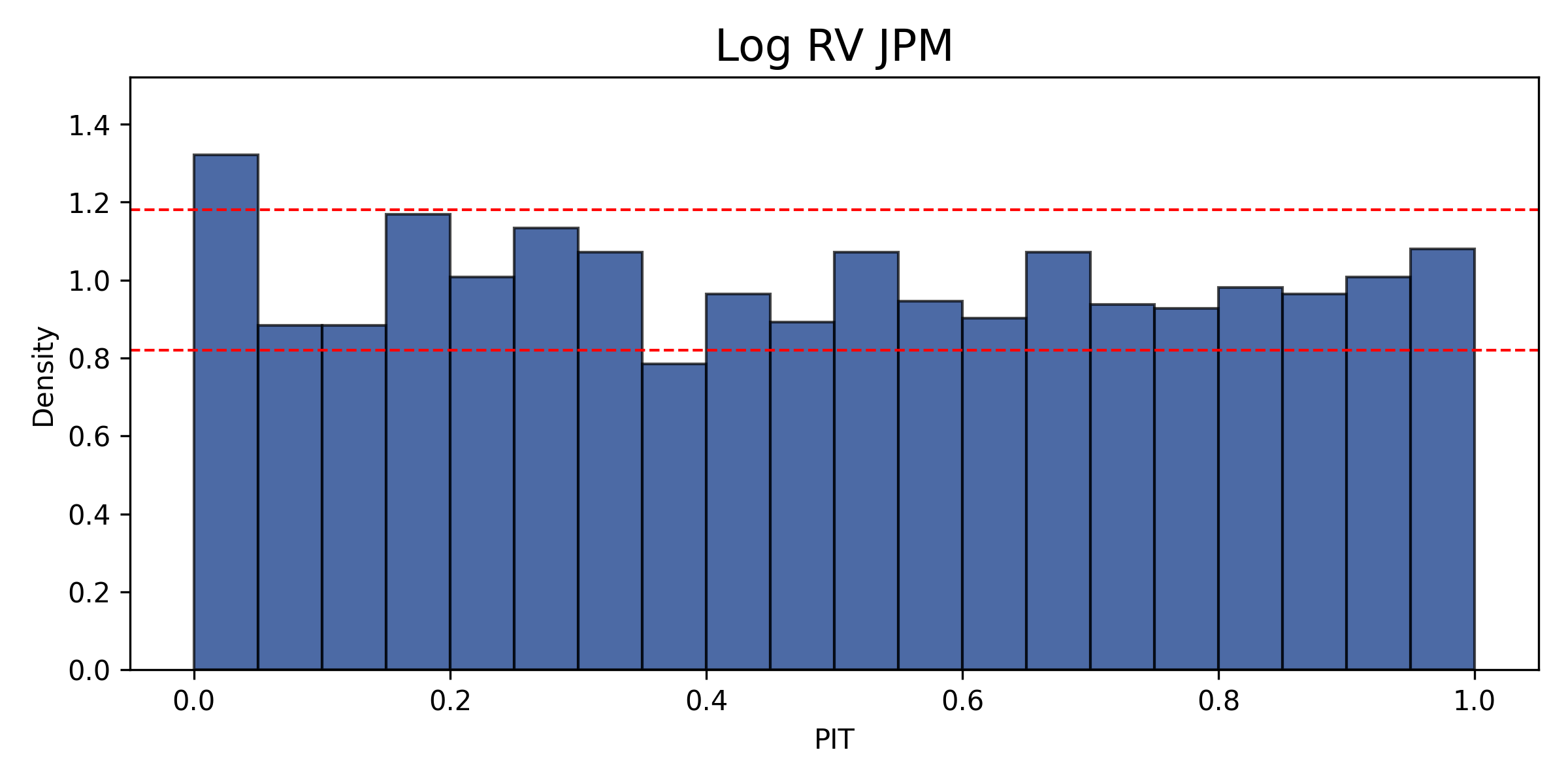}\tabularnewline
\end{tabular}

\caption{Marginal PITs of GPD-based predictive distribution.\protect \\
~\protect \\
~}

\label{fig:rcov_pit}
\end{figure}

Overall, these results highlight the ability of the proposed framework
to produce realistic predictive distributions of realized covariances.
As an additional sanity check, we compute the unconditional distribution
implied by the GPD-based predictive distribution. To do this, we compute
$\left\{ G_{\hat{\gamma}_{n}}\left(Z_{t},X_{t}\right)\right\} _{t=1}^{n}$,
where $Z_{t}$ are again drawn independently from a standard normal
distribution of dimension 3. In words, we use the estimated generator
to draw from the conditional distribution, with conditioning variables
following their unconditional empirical distribution. As it is evident
from Figure \ref{fig:rcov} (bottom right) and Table \ref{tab:rcov_stats},
the unconditional distribution implied by GPD provides an excellent
match to the empirical one.

\section{Concluding remarks\label{sec:conclusion}}

This paper introduced the Generative Predictive Distribution (GPD)
framework as a flexible, model-free, and assumption-lean approach
for modeling predictive distributions of stationary time series. Theoretical
foundation for this approach was given in the form of the generative
representation of a predictive distribution (our Lemma \ref{lem:representation}),
and Hausdorff consistency in the GPD minimax estimation problem was
established (Theorem \ref{thm:consistent}). Formal justification
for the choice of criterion function was also provided (Lemma \ref{lem:criterion}).
Across three distinct empirical applications --- dynamic portfolio
allocation with S\&P 500 returns, forecasting realized variance, and
modeling multivariate realized covariance matrices --- we demonstrated
that GPD can successfully capture non-Gaussianity, conditional heteroskedasticity,
and nonlinear dependencies. A key practical advantage of this framework
is its reliance on direct sampling: because the estimated generator
acts as an efficient simulator, practitioners can easily obtain predictive
distributions for nonlinear transformations of the data. Overall,
the proposed framework provides researchers with a rigorous and computationally
efficient methodology for modeling potentially complex and multivariate
predictive distributions.

Several future research topics could be entertained. We discussed
the challenges of showing weak convergence of the estimated generative
representation in Section 4.3, and a rigorous treatment of this issue
would be of interest. The generator, discriminator, and criterion
function we employed in this paper worked very well in our applications,
but alternative formulations for these could certainly be considered.
We illustrated the usefulness and versatility of the GPD approach
in three practical applications in financial econometrics, but different
applications in econometrics and other fields could be explored further.
In particular, it would be interesting to see how well the GPD method
works in applications where the dimension of $Y_{t}$ is notably larger
than three.

\newpage{}

\appendix

\section{Appendix}

\subsection{Proof of Lemma \ref{lem:representation}\label{subsec:representation}}

First note that the existence of a probability kernel $\mu:\mathcal{X}\to\mathcal{Y}$
satisfying 
\[
\mu(X_{t},B)=P\left\{ Y_{t}\in B\mid X_{t}\right\} \quad\text{a.s.}
\]
for all $B\in\mathcal{B}(\mathcal{Y})$ is ensured by Theorem 8.5
of \citet{Kallenberg:2021}. Moreover, by the same Theorem, the kernel
$\mu$ is unique (almost everywhere with respect to the distribution
of $X_{t}$), and therefore the assumed stationarity of $\left\{ Y_{t},X_{t}\right\} _{t\in\mathbb{Z}}$
implies that the kernel $\mu$ does not depend on $t$. Next, by Lemma
4.22 of \citet{Kallenberg:2021}, there exists a measurable function
$f:\mathcal{X}\times[0,1]\to\mathcal{Y}$ such that for any sequence
of IID $U(0,1)$ random variables $\left\{ U_{t}\right\} _{t\in\mathbb{Z}}$
such that $U_{t}$ is independent of $X_{t}$ it holds that $f(x,U_{t})$
has distribution $\mu(x,\cdot)=P_{Y_{t}\mid X_{t}=x}$ for every $x\in\mathcal{X}$.
As in the proof of Lemma 2.1 of \citet{Zhou:etal:2023}, specifically
choose the sequence $U_{t}$ to be $U_{t}=\Phi(Z_{t,1})$ where $\Phi$
denotes the standard normal cdf and $Z_{t,1}$ the first component
of $Z_{t}$. Defining the function $G:\mathbb{R}^{d_{Z}+d_{X}}\to\mathbb{R}^{d_{Y}}$
as $G(z,x)=f(x,\Phi(z_{1}))$ where $z_{1}$ is the first component
of $z$, it holds that 
\[
G(Z_{t},x)=f(x,U_{t})\sim\mu(x,\cdot)=P_{Y_{t}\mid X_{t}=x},\quad x\in\mathcal{X}.
\]
Finally, defining $V_{t}=G(Z_{t},X_{t})=f(X_{t},U_{t})$ it follows
from the proof of Theorem 8.17 in \citet{Kallenberg:2021} that
\[
(G(Z_{t},X_{t}),X_{t})=(V_{t},X_{t})\overset{d}{=}(Y_{t},X_{t}),
\]
completing the proof.

\subsection{Example illustrating the continuity of $G$\label{subsec:continuity}}

In Section \ref{subsec:generator} we remarked that a continuous generative
map $G$ can be constructed if the joint distribution of $(Y_{t},X_{t})$
admits a strictly positive and continuous density on a connected support.
To illustrate this construction without overly burdensome notation,
we consider the case $d_{X}=1$ and $d_{Y}=2$, though the derivation
naturally generalizes to arbitrary finite dimensions. Assume the target
variable $Y_{t}=(Y_{1,t},Y_{2,t})$ and covariates $X_{t}$ admit
a joint density $f_{Y_{1},Y_{2},X}(y_{1},y_{2},x)$ that is strictly
positive and continuous on a connected support $\mathcal{Y}\times\mathcal{X}\subseteq\mathbb{R}^{3}$.
We first define the forward Rosenblatt transformation $T:\mathcal{Y}\times\mathcal{X}\to(0,1)^{3}$
given by\textcolor{red}{{} }$T(y_{1},y_{2},x)=(u_{1},u_{2},u_{3})$
with
\begin{align*}
u_{1} & =\int_{-\infty}^{x}f_{X}(s)ds,\quad u_{2}=\int_{-\infty}^{y_{1}}f_{Y_{1}\mid X}(s\mid x)ds,\quad u_{3}=\int_{-\infty}^{y_{2}}f_{Y_{2}\mid Y_{1},X}(s\mid y_{1},x)ds.
\end{align*}
Because the joint density is continuous, its partial integrals are
continuously differentiable, meaning that $T$ is itself continuously
differentiable. The Jacobian $J_{T}$ is lower-triangular, yielding
\[
\det(J_{T})=f_{X}(x)\cdot f_{Y_{1}\mid X}(y_{1}\mid x)\cdot f_{Y_{2}\mid Y_{1},X}(y_{2}\mid y_{1},x)=f_{Y_{1},Y_{2},X}(y_{1},y_{2},x)>0
\]
everywhere on the support of $\mathcal{Y}\times\mathcal{X}$. Therefore,
by the multivariate inverse function theorem, the transformation $T$
is a diffeomorphism, which guarantees the existence of a unique, continuously
differentiable inverse map $T^{-1}:(0,1)^{3}\to\mathcal{Y}\times\mathcal{X}$.
We define $G$ as follows:
\[
G\left(z,x\right)=\Pi_{\mathcal{Y}}\left(T^{-1}\left(F_{X}(x),\Phi(z_{1}),\Phi(z_{2})\right)\right),
\]
where $\Pi_{\mathcal{Y}}$ is the projection operator onto the $Y$
coordinates. The map $G$ is continuous.

A particularly illuminating case is given by the Gaussian distribution.
Suppose that $P_{Y_{t}\mid X_{t}=x}=\mathcal{N}\left(\mu(x),\Sigma(x)\right)$,
where
\[
\mu(x)=\left[\begin{array}{c}
\mu_{1}(x)\\
\mu_{2}(x)
\end{array}\right],\quad\Sigma(x)=\left[\begin{array}{cc}
\sigma_{1}^{2}(x) & \sigma_{1}(x)\sigma_{2}(x)\rho(x)\\
\sigma_{1}(x)\sigma_{2}(x)\rho(x) & \sigma_{2}^{2}(x)
\end{array}\right].
\]
Suppose $\rho(x)\in\left(-1,1\right)$, and that $\mu(x)$ and $\Sigma(x)$
are continuous in $x$. Let 
\[
G(Z,x)=\mu(x)+L(x)Z,
\]
where $L(x)$ is the Cholesky factor of $\Sigma(x)$. Clearly, such
a map is continuous.

\subsection{Proof of Lemma \ref{lem:criterion}\label{subsec:criterionproof}}

First, by Fubini's theorem, we write
\[
\mathcal{L}(G,D)-\ln2=\int_{\mathcal{X}}P_{X}(dx)\underset{I(x)}{\underbrace{\int_{\mathcal{Y}}\int_{\mathcal{Y}}\ln\sigma\left(D(y,x)-D(g,x)\right)\mu(x,dy)\mu_{G}(x,dg)}}.
\]
The integral $I(x)$ can be re-written as
\begin{align*}
I(x) & =\frac{1}{2}\int_{\mathcal{Y}}\int_{\mathcal{Y}}\ln\underset{s}{\underbrace{\sigma\left(D(y,x)-D(g,x)\right)}}\mu(x,dy)\mu_{G}(x,dg)\\
 & \quad+\frac{1}{2}\int_{\mathcal{Y}}\int_{\mathcal{Y}}\ln\sigma\left(D(g,x)-D(y,x)\right)\mu(x,dg)\mu_{G}(x,dy)\\
 & =\frac{1}{2}\int_{\mathcal{Y}}\int_{\mathcal{Y}}\ln(s)\mu(x,dy)\mu_{G}(x,dg)+\frac{1}{2}\int_{\mathcal{Y}}\int_{\mathcal{Y}}\ln(1-s)\mu(x,dg)\mu_{G}(x,dy),
\end{align*}
where $s=s(x,y,g)=\sigma(D(y,x)-D(g,x))$, and we have used the identity
$\sigma(-u)=1-\sigma(u)$. We first represent both terms under the
common reference measure $\nu=\frac{1}{2}(\mu+\mu_{G})$. By definition,
both $\mu$ and $\mu_{G}$ are absolutely continuous with respect
to $\nu$. By the Radon-Nikodym theorem, we may re-write $I(x)$ under
the product reference measure $\nu(x,dy)\nu(x,dg)$:
\begin{align*}
I(x)= & \frac{1}{2}\int_{\mathcal{Y}}\int_{\mathcal{Y}}\left[\ln(s)a(y,g)+\ln(1-s)b(y,g)\right]\nu(x,dy)\nu(x,dg),
\end{align*}
where
\[
a(y,g)=\frac{d\mu(x,\cdot)}{d\nu(x,\cdot)}(y)\frac{d\mu_{G}(x,\cdot)}{d\nu(x,\cdot)}(g),\quad\text{and}\quad b(y,g)=\frac{d\mu(x,\cdot)}{d\nu(x,\cdot)}(g)\frac{d\mu_{G}(x,\cdot)}{d\nu(x,\cdot)}(y).
\]
 We maximize the integrand for every pair $(y,g)$. For a fixed pair
$(y,g)$, setting the derivative of the term in square brackets with
respect to $s$ to zero yields
\[
\frac{a(y,g)}{s}-\frac{b(y,g)}{1-s}=0\implies s^{*}=\frac{a(y,g)}{a(y,g)+b(y,g)}.
\]
Since $s=\sigma(D(y,x)-D(g,x))$, we apply the inverse of $\sigma$
(i.e. the logistic transformation) to solve for the optimal functional
difference:
\[
D^{*}(y,x)-D^{*}(g,x)=\ln\frac{s^{*}}{1-s^{*}}=\ln\frac{a(y,g)}{b(y,g)}.
\]
Plugging in the definitions of $a(y,g)$ and $b(y,g)$ and rearranging
terms, we obtain
\begin{align*}
D^{*}(y,x)-D^{*}(g,x) & =\ln\frac{\frac{d\mu(x,\cdot)}{d\nu(x,\cdot)}(y)}{\frac{d\mu_{G}(x,\cdot)}{d\nu(x,\cdot)}(y)}-\ln\frac{\frac{d\mu(x,\cdot)}{d\nu(x,\cdot)}(g)}{\frac{d\mu_{G}(x,\cdot)}{d\nu(x,\cdot)}(g)}.
\end{align*}
It follows that the supremum is attained by any measurable function
of the form
\[
D^{*}(y,x)=\ln\frac{\frac{d\mu(x,\cdot)}{d\nu(x,\cdot)}(y)}{\frac{d\mu_{G}(x,\cdot)}{d\nu(x,\cdot)}(y)}+C(x)=\ln\frac{d\mu(x,\cdot)}{d\nu(x,\cdot)}(y)-\ln\frac{d\mu_{G}(x,\cdot)}{d\nu(x,\cdot)}(y)+C(x),
\]
which proves that \eqref{eq:D*} holds. Plugging the optimal choice
$s=s^{*}$ into $I(x)$, we have
\begin{align*}
I^{*}(x) & =\frac{1}{2}\int_{\mathcal{Y}}\int_{\mathcal{Y}}\ln\frac{a(y,g)}{a(y,g)+b(y,g)}a(y,g)\nu(x,dy)\nu(x,dg)\\
 & \quad+\frac{1}{2}\int_{\mathcal{Y}}\int_{\mathcal{Y}}\ln\frac{b(y,g)}{a(y,g)+b(y,g)}b(y,g)\nu(x,dy)\nu(x,dg)\\
 & =\frac{1}{2}\int_{\mathcal{Y}}\int_{\mathcal{Y}}\ln\frac{a(y,g)}{\frac{a(y,g)+b(y,g)}{2}}a(y,g)\nu(x,dy)\nu(x,dg)\\
 & \quad+\frac{1}{2}\int_{\mathcal{Y}}\int_{\mathcal{Y}}\ln\frac{b(y,g)}{\frac{a(y,g)+b(y,g)}{2}}b(y,g)\nu(x,dy)\nu(x,dg)\\
 & \quad-\ln2.
\end{align*}
Note that
\[
P_{1}(dx,dy,dg)=P_{X}(dx)a(y,g)\nu(x,dy)\nu(x,dg)
\]
 and
\[
P_{2}(dx,dy,dg)=P_{X}(dx)b(y,g)\nu(x,dy)\nu(x,dg),
\]
so integrating $I^{*}(x)$ over $\mathcal{X}$ with respect to $P_{X}$
we obtain
\begin{align*}
 & \frac{1}{2}\int\ln\frac{dP_{1}}{dM}(x,y,g)P_{1}(dx,dy,dg)+\frac{1}{2}\int\ln\frac{dP_{2}}{dM}(x,y,g)P_{2}(dx,dy,dg)-\ln2\\
 & =\frac{1}{2}\mathrm{KL}\left(P_{1}\parallel M\right)+\frac{1}{2}\mathrm{KL}\left(P_{2}\parallel M\right)-\ln2:=\mathrm{JS}\left(P_{1}\parallel P_{2}\right)-\ln2,
\end{align*}
where $M=\frac{1}{2}(P_{1}+P_{2})$. Therefore, equation \eqref{eq:JS}
holds, and the proof is complete.

\subsection{Sliced Wasserstein Distance (SWD)\label{subsec:swd}}

SWD is a fast method for computing distances between two multivariate
distributions, $P$ and $Q$. We have samples $\left\{ V_{i}\right\} _{i=1}^{n}$
and $\left\{ U_{i}\right\} _{i=1}^{n}$ from $P$ and $Q$, taking
values in $\mathbb{R}^{d}$. First, we project the data from $\mathbb{R}^{d}$
to $\mathbb{R}$ using a random projection:
\[
\tilde{V}_{i}=\sum_{j=1}^{d}w_{j}V_{ij},\quad\tilde{U}_{i}=\sum_{j=1}^{d}w_{j}U_{ij},
\]
where $w_{j}\sim\mathcal{N}\left(0,1\right).$ Second, we compute
the 2-Wasserstein distance in 1D:
\[
\mathcal{W}_{2}\left(P,Q\right)=\frac{1}{n}\sum_{i=1}^{n}\left(\tilde{V}_{(i)}-\tilde{U}_{(i)}\right)^{2},
\]
where $\tilde{V}_{(i)}$ is the $i$th order statistic of the projected
$V'$s and $\tilde{U}_{(i)}$ is defined analogously. Since the projection
is random, we repeat this procedure a number of times and average
out the resulting 2-Wasserstein distances.

We employ the sliced Wasserstein distance (SWD) as a stopping criterion
in Algorithm \ref{algo:gpd}. We monitor this metric averaged across
each epoch. If the metric fails to improve for a certain number of
consecutive epochs (also known as patience parameter), training is
terminated and the generator parameters from the best-performing epoch
are used.

\subsection{Proof of Theorem 1\label{subsec:proof-thm}}

We begin by deriving certain uniform convergence (and rate of convergence)
results for the sample average functions $\widehat{f}_{n}(\gamma,\delta)$.
To this end, we verify the assumptions necessary to invoke Theorem
2.1 in \citet{Arcones:Yu:1994}. First, we show that the class of
functions $F(x,y,z,\theta)$ parameterized by $\theta\in\Theta$,
henceforth denoted $\text{\ensuremath{\mathcal{F}}}$, is a VC-subgraph
class of functions in the sense of \citet[Section 2.6]{vanderVaart:Wellner:2023}.
By our Assumption \ref{asm:criterion}, $G_{\gamma}$ and $D_{\delta}$
are both Multi-Layer Perceptrons (MLPs) as in (\ref{eq:generator})
and (\ref{eq:discriminator}). The MLP specifications for $G_{\gamma}$
and $D_{\delta}$, which have ReLU activations by Assumption \ref{asm:criterion},
belong to the class of VC-subgraph functions by Theorem 7 in \citet{Bartlett:etal:2019}.
The term $D_{\delta}(Y_{t},X_{t})-D_{\delta}(G_{\gamma}(Z_{t},X_{t}),X_{t})$
is a special case of feed-forward neural networks with piecewise polynomial
activation, defined in Theorem 7 in \citet{Bartlett:etal:2019}, since
it is possible to enumerate its computational units in a way such
that each computational unit has connections only from units in earlier
layers. It follows that $D_{\delta}(Y_{t},X_{t})-D_{\delta}(G_{\gamma}(Z_{t},X_{t}),X_{t})$
is also VC-subgraph. Furthermore, since $\ln\sigma(\cdot)$ is a fixed
monotone function, it follows that $\ln\sigma(D_{\delta}(Y_{t},X_{t})-D_{\delta}(G_{\gamma}(Z_{t},X_{t}),X_{t}))$
is also VC-subgraph \citep[Lemma 2.6.20]{vanderVaart:Wellner:2023}.
Therefore, the class of functions $\text{\ensuremath{\mathcal{F}}}$
is VC-subgraph.

Second, we show that the class of functions $\text{\ensuremath{\mathcal{F}}}$
has an envelope function with appropriate moments. By assumptions
\ref{asm:criterion} and \ref{asm:compact}, and Lemma \ref{lem:envelope}
below, there exists a finite constant $C$ such that $\sup_{\theta\in\Theta}\left|F\left(x,y,z,\theta\right)\right|\leq C\left(1+\|x\|+\|y\|+\|z\|\right):=\overline{F}(x,y,z)<\infty$
for each $\left(x,y,z\right)$. Moreover, Assumption \ref{asm:moments}
ensures that this envelope function $\overline{F}$ satisfies $\mathbb{E}[\overline{F}(X_{t},Y_{t},Z_{t})^{p}]<\infty$.
Third, we note that by Assumption \ref{asm:data} the $\beta-$mixing
coefficients $\beta_{k}$ satisfy $k^{p/(p-2)}(\ln k)^{2(p-1)/(p-2)}\beta_{k}\to0$
as $k\to\infty$.

By Theorem 2.1 in \citet{Arcones:Yu:1994} we can now conclude that
$n^{1/2}(\widehat{f}_{n}-f)$ converges weakly to a Gaussian process
in $l^{\infty}(\Theta)$ with $f(\theta)$ continuous in $\theta$.
Therefore it also holds that 
\begin{equation}
\sup_{\theta\in\Theta}|\widehat{f}_{n}(\theta)-f(\theta)|\overset{p}{\rightarrow}0\quad\text{and}\quad\sqrt{n}\sup_{\theta\in\Theta}|\widehat{f}_{n}(\theta)-f(\theta)|=O_{p}\left(1\right)\label{ProofThm1ULLNandRATE}
\end{equation}
with the function $f(\theta)$ continuous in $\theta$.

The proof now continues following the arguments in \citet{Meitz:2024}.
First, we note that
\begin{equation}
\sup_{\theta\in\Theta}|\widehat{Q}_{n}(\theta)-Q(\theta)|\leq2\sup_{\theta\in\Theta}|\widehat{f}_{n}(\theta)-f(\theta)|\stackrel{p}{\to}0\label{eq:bound-on-Q}
\end{equation}
by the former uniform convergence result in (\ref{ProofThm1ULLNandRATE}).
As was noted above, $f\left(\theta\right)$ is continuous on $\Theta$
(which is compact by Assumption \ref{asm:compact}), and by Berge's
maximum theorem the function $\sup_{\delta\in\Delta}f(\gamma,\delta)$
is continuous in $\gamma$. This implies that for all $\epsilon>0$
there exists $\eta(\epsilon)>0$ such that
\[
\inf_{\theta\in\Theta\setminus\Theta_{0}^{\epsilon}}Q(\theta)\geq\eta(\epsilon),
\]
where $\Theta_{0}^{\epsilon}=\left\{ \theta\in\Theta:d(\theta,\Theta_{0})\leq\epsilon\right\} $.
For every $\epsilon_{d}>0$, there exists $\eta(\epsilon_{d})$ such
that $\inf_{\theta\in\Theta\setminus\Theta_{0}^{\epsilon_{d}}}Q(\theta)\geq\eta(\epsilon_{d})$.
Also, for every $\epsilon_{p}>0$ there is a $n_{\epsilon_{p}}$ such
that for all $n\geq n_{\epsilon_{p}}$ we have $\tau_{n}\leq\eta(\epsilon_{d})/4$
with probability larger than $1-\epsilon_{p}$. We obtain
\begin{align*}
\sup_{\theta\in\widehat{\Theta}_{n}\left(\tau_{n}\right)}Q\left(\theta\right) & \leq\sup_{\theta\in\widehat{\Theta}_{n}\left(\tau_{n}\right)}|\widehat{Q}_{n}(\theta)-Q(\theta)|+\sup_{\theta\in\widehat{\Theta}_{n}\left(\tau_{n}\right)}\widehat{Q}_{n}\left(\theta\right)\\
 & \leq\eta\left(\epsilon_{d}\right)/2<\inf_{\theta\in\Theta\setminus\Theta_{0}^{\epsilon_{d}}}Q\left(\theta\right),
\end{align*}
implying that $\widehat{\Theta}_{n}\left(\tau_{n}\right)\subseteq\Theta_{0}^{\epsilon_{d}}$
and $\sup_{\theta\in\widehat{\Theta}_{n}\left(\tau_{n}\right)}d\left(\theta,\Theta_{0}\right)\leq\epsilon_{d}$
for all $n\geq n_{\epsilon_{p}}$, with probability larger than $1-\epsilon_{p}$.
Thus, part \ref{thm:1a} holds.

For \ref{thm:1b}, by the latter rate of convergence result in (\ref{ProofThm1ULLNandRATE})
and \eqref{eq:bound-on-Q}, we have 
\[
\sup_{\theta\in\Theta}|\widehat{Q}_{n}(\theta)-Q(\theta)|=O_{p}(n^{-1/2}).
\]
By the first part in the statement of part \ref{thm:1b} and since
$\sup_{\theta\in\Theta_{0}}Q\left(\theta\right)=0$, we have that
for any $\epsilon_{p}>0$, we can find a $n_{\epsilon_{p}}$ such
that for all $n\geq n_{\epsilon_{p}}$,
\[
\sup_{\theta\in\Theta_{0}}\widehat{Q}_{n}\left(\theta\right)\leq\sup_{\theta\in\Theta}|\widehat{Q}_{n}(\theta)-Q(\theta)|+\sup_{\theta\in\Theta_{0}}Q\left(\theta\right)=O_{p}\left(1\right)(n^{-1/2}/\tau_{n})\tau_{n}\leq\tau_{n},
\]
with probability at least $1-\epsilon_{p}$. But the event in the
display is equivalent to $\Theta_{0}\subseteq\widehat{\Theta}_{n}\left(\tau_{n}\right)$,
so $\sup_{\theta\in\Theta_{0}}d\left(\theta,\widehat{\Theta}_{n}\left(\tau_{n}\right)\right)=0$.
This implies part \ref{thm:1b} holds, and the proof is complete.

\subsection{Upper bound on envelope\label{subsec:envelope}}
\begin{lem}
\textcolor{teal}{\label{lem:envelope}}Suppose that Assumptions \ref{asm:compact}
and \ref{asm:criterion} hold. Then,
\[
\sup_{\theta\in\Theta}\left|F\left(x,y,z,\theta\right)\right|\leq C(1+\|x\|+\|y\|+\|z\|).
\]
\end{lem}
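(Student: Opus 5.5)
We reduce the bound on $F$ to a linear-growth bound on the argument of $\ln\sigma$, and get that from the Lipschitz structure of ReLU networks with bounded weights. The starting point is the elementary inequality $|\ln\sigma(u)| \le \ln 2 + |u|$ for all $u\in\mathbb{R}$. It holds because $-\ln\sigma(u)=\ln(1+e^{-u})$. For $u\ge 0$ this is at most $\ln 2$. For $u<0$ it equals $-u+\ln(1+e^{u})\le |u|+\ln 2$. It therefore suffices to show that, uniformly over $\theta\in\Theta$,
\[
|D_{\delta}(y,x)-D_{\delta}(G_{\gamma}(z,x),x)| \le C_1(1+\|x\|+\|y\|+\|z\|).
\]

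The key step is a linear-growth and Lipschitz property of ReLU MLPs with parameters in a compact set. By Assumption \ref{asm:compact}, $\Theta$ is compact, so there is a finite $M$ bounding every weight matrix norm and every intercept norm appearing in \eqref{eq:generator} and \eqref{eq:discriminator}. Each affine layer $v\mapsto Av+b$ then satisfies $\|Av+b\|\le M\|v\|+M$. The ReLU $a(u)=\max\{0,u\}$ from Assumption \ref{asm:criterion} is applied componentwise, so $\|a(v)\|\le\|v\|$ and $\|a(v)-a(w)\|\le\|v-w\|$. Composing these estimates layer by layer and inducting on the number of layers gives two bounds:
\[
\|G_{\gamma}(z,x)\|\le C_G(1+\|z\|+\|x\|),\qquad |D_{\delta}(v)-D_{\delta}(w)|\le M^{L_D+1}\|v-w\|.
\]
Here $C_G$ depends only on $M$, $L_G$ and $H_G$. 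The Lipschitz bound for $D_\delta$ holds because the intercepts cancel in differences.

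Next we combine the two bounds. Apply the Lipschitz bound with $v=(y,x)$ and $w=(G_{\gamma}(z,x),x)$. This gives
\[
|D_{\delta}(y,x)-D_{\delta}(G_{\gamma}(z,x),x)|\le M^{L_D+1}\bigl(\|y\|+\|G_{\gamma}(z,x)\|\bigr)\le M^{L_D+1}\bigl(\|y\|+C_G(1+\|z\|+\|x\|)\bigr).
\]
None of the constants depend on $\theta$, so taking the supremum over $\Theta$ and using the $\ln\sigma$ inequality yields the claimed bound with $C=\ln 2+M^{L_D+1}\max\{1,C_G\}$.

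The main obstacle is bookkeeping rather than anything conceptual. One has to be careful that the constants are uniform in $\theta$, which is exactly where compactness of $\Theta$ is used. One also has to note that working with the difference of discriminator values is convenient: the intercept and output bias of $D_\delta$ drop out, so only the Lipschitz constant is needed. A direct bound on $|D_\delta|$ via linear growth would also work.
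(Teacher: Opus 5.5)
Your proof is correct and follows essentially the same route as the paper. Both arguments use the inequality $|\ln\sigma(u)|\le\ln 2+|u|$ together with a layer-by-layer induction, in which compactness of $\Theta$ bounds the affine layers and $\|a(u)\|\le\|u\|$ holds for ReLU, giving $\|G_\gamma(z,x)\|\le C(1+\|z\|+\|x\|)$. The only difference is that you control $D_\delta(y,x)-D_\delta(G_\gamma(z,x),x)$ through the Lipschitz constant $M^{L_D+1}$ of $D_\delta$, so the biases cancel. The paper instead bounds $|D_\delta(y,x)|$ and $|D_\delta(G_\gamma(z,x),x)|$ separately by linear growth and applies the triangle inequality, which is exactly the alternative you mention yourself.
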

\begin{proof}
The linear layers $\gamma_{1},\ldots,\gamma_{L_{G}},\gamma_{O}$ in
$G_{\gamma}(z,x)$ are of the form
\[
\gamma_{\ell}(u)=\Gamma_{\ell,0}+\Gamma_{\ell}u,\quad\ell=1,\dots,L_{G},O,
\]
with the dimensions varying from layer to layer. By the triangle inequality
and matrix inequality $\|Wu\|\leq\|W\|\|u\|$ (for any matrix $W$,
we let $\|W\|$ denote the operator norm), we have $\|\gamma_{\ell}(u)\|\leq\|\Gamma_{\ell,0}\|+\|\Gamma_{\ell}\|\|u\|$.
By compactness of $\Gamma$, there is a $C<\infty$ such that
\begin{align*}
\|\gamma_{1}(z,x)\| & \leq C\left(1+\|\left(z,x\right)\|\right)\leq C\left(1+\|z\|+\|x\|\right),\\
\|\gamma_{\ell}(u)\| & \leq C\left(1+\|u\|\right),\quad\ell=2,\dots,L_{G},O.
\end{align*}
In what follows, the value of the constant $C$ is allowed to change
from line to line. Let $a(u)=\max\{u,0\}$, where the operation is
applied element-wise for a vector argument. We have $\|a(u)\|\leq\|u\|$.
Thus,
\begin{align*}
\|a\left(\gamma_{1}(z,x)\right)\| & \leq\|\gamma_{1}(z,x)\|\leq C\left(1+\|z\|+\|x\|\right).
\end{align*}
Next,
\[
\|\gamma_{2}\left(a\left(\gamma_{1}(z,x)\right)\right)\|\leq C\left(1+\|a\left(\gamma_{1}(z,x)\right)\|\right)\leq C\left(1+C\left(1+\|z\|+\|x\|\right)\right)\leq C\left(1+\|z\|+\|x\|\right).
\]
By induction, this implies that for any number of layers, there is
a constant $C<\infty$ such that
\[
\|G_{\gamma}(z,x)\|\leq C\left(1+\|z\|+\|x\|\right).
\]
By the same argument, we have
\[
\left|D_{\delta}(y,x)\right|\leq C\left(1+\|y\|+\|x\|\right).
\]
The following inequality holds for the log-sigmoid: $\left|\ln\sigma(u)\right|\leq\ln2+|u|$
for every $u$. Therefore,
\begin{align*}
\left|F\left(x,y,z,\theta\right)\right| & =\left|\ln\sigma\bigl(D_{\delta}(y,x)-D_{\delta}(G_{\gamma}(z,x),x)\bigr)\right|\\
 & \leq\ln2+\left|D_{\delta}(y,x)-D_{\delta}(G_{\gamma}(z,x),x)\right|\\
 & \leq\ln2+\left|D_{\delta}(y,x)\right|+\left|D_{\delta}(G_{\gamma}(z,x),x)\right|.
\end{align*}
Here the last term can bounded as follows:
\[
\left|D_{\delta}(G_{\gamma}(z,x),x)\right|\leq C(1+\|G_{\gamma}(z,x)\|+\|x\|)\leq C\left(1+\|x\|+\|z\|\right).
\]
Therefore it follows that
\begin{align*}
\left|F\left(x,y,z,\theta\right)\right| & \leq C(1+\|x\|+\|y\|+\|z\|).
\end{align*}
Since the upper bound is independent of $\theta$, taking supremum
over $\Theta$ on both sides of the last display gives the result.
\end{proof}
\newpage{}

\bibliographystyle{natbib}
\bibliography{references}

\end{document}